\newtheorem{theorem}{Theorem}
\newtheorem{proposition}{Proposition}
\newtheorem{lemma}{Lemma}
\newtheorem{remark}{Remark}
\newtheorem{definition}{Definition}
\theoremstyle{remark}
\newcommand{\re}{\text{\rm Re }}
\newcommand{\tb}{\text{\bf t}}
\newcommand{\Diff}{\text{\rm Diff}\,}
\newcommand{\Vect}{\text{\rm Vect$\,$}}
\newcommand{\Hol}{\text{\rm Hol }}
\DeclareMathOperator{\spn}{span}
\DeclareMathOperator{\Gr}{Gr}
\DeclareMathOperator{\kernel}{ker}
\DeclareMathOperator{\cokernel}{coker}
\DeclareMathOperator{\virtcard}{virtcard}
\DeclareMathOperator{\virtdim}{virtdim}
\DeclareMathOperator{\ind}{ind}
\DeclareMathOperator{\id}{id}
\begin{document}
\title[Evolution of smooth shapes]{Evolution of smooth shapes and integrable systems}
\author{Irina Markina and Alexander Vasil'ev}

\address{Department of Mathematics,
University of Bergen, P.O.~Box~7803, Bergen N-5020, Norway}

\email{irina.markina@uib.no} \email{alexander.vasiliev@uib.no}

\thanks{ The authors have been  supported by the grant of the Norwegian Research Council \#204726/V30, by the NordForsk network `Analysis and Applications', grant \#080151, and by the European Science Foundation Research Networking Programme HCAA}


\subjclass[2010]{Primary 81R10, 17B68, 30C35; Secondary 70H06}

\keywords{Sato-Segal-Wilson Grassmannian, Virasoro algebra, Univalent Function, L\"owner-Kufarev equation, Hamiltonian}


\begin{abstract}
We consider a homotopic evolution in the space of smooth shapes starting from the unit circle. Based on the L\"owner-Kufarev equation we give a Hamiltonian formulation of this evolution and provide conservation laws. The symmetries of the evolution are given by the Virasoro algebra.
The `positive' Virasoro generators span the holomorphic part of the complexified vector bundle over the space of conformal embeddings of the unit disk into the complex plane and smooth on the boundary. In the covariant formulation they are conserved along the Hamiltonian flow.
The `negative' Virasoro generators can be recovered by an iterative method making use of the canonical Poisson structure. We  study an embedding of the L\"owner-Kufarev trajectories into the Segal-Wilson  Grassmannian, construct the $\tau$-function, the Baker-Akhiezer function, and finally, give a class of solutions to the KP equation.
\end{abstract}
\maketitle

\section{Introduction}\label{introd}

The challenge of structural understanding of non-equilibrium interface dynamics has become increasingly important in mathematics and physics. 
The study of 2D shapes is one of the central problems in the field of applied sciences. A program of such study and its importance was summarized by Mumford at ICM 2002 in Beijing \cite{Mumford}. By {\it shape} we understand a  simple closed curve in the complex plane dividing it into two simply connected domains.
Dynamical interfacial properties, such as fluctuations, nucleation and aggregation, mass and charge transport, are often very complex. There exists no single theory or model that can predict all such properties. Many physical processes, as well as complex dynamical systems, iterations and construction of Lie semigroups with respect to the composition operation, lead to the study of growing systems of plane domains. Recently, it has become clear that one-parameter expanding evolution families of simply
connected domains in the complex plane in some special models has been governed by infinite systems
of evolution parameters, conservation laws. This phenomenon reveals a bridge between a non-linear evolution of complex shapes emerged in physical problems, dissipative in most of the cases, and exactly solvable models. One of such processes is the Laplacian growth, in which the harmonic (Richardson's) moments are conserved under the evolution, see e.g., \cite{GustVas, Mineev}. The infinite number of evolution parameters reflects the infinite number of degrees of freedom of the system, and clearly suggests to apply field theory methods as a natural tool of study. The Virasoro algebra provides a structural background in most of field theories, and it is not surprising that it appears in soliton-like problems, e.g., KP, KdV or Toda hierarchies, see \cite{Faddeev, Gervais}.

Another group of models, in which the evolution is governed by an infinite number of parameters, can be observed in controllable dynamical systems, where the infinite number of degrees of freedom follows from the infinite number of driving terms. Surprisingly,
the same algebraic structural background appears again for this group. We develop this viewpoint in the present paper.

One of the general approaches to the homotopic evolution of shapes starting from a canonical shape, the unit disk in our case,  was provided by L\"owner and Kufarev  \cite{Kufarev, Loewner, Pommerenke2}. A shape evolution is described by a time-dependent conformal parametric map from the canonical domain onto the domain
bounded by the shape at any fixed instant. In fact, these one-parameter conformal maps satisfy the L\"owner-Kufarev differential equation, or an infinite dimensional controllable system, for which
the infinite number of conservation laws is given by the {\it Virasoro generators} in their covariant form.

Recently, Friedrich and
Werner \cite{FriedrichWerner}, and independently Bauer and Bernard \cite{BB}, found relations between SLE (stochastic or Schramm-L\"owner evolution) and the highest weight representation of the Virasoro algebra. Moreover, Friedrich developed the Grassmannian approach to relate SLE with the highest weight representation of the Virasoro algebra in~\cite{Friedrich}.

All  above results encourage us to conclude that the {\it Virasoro algebra} is a common algebraic structural basis for these and possibly other types of contour dynamics and we present the development in this direction here. At the same time, the infinite number of conservation laws suggests
a relation with exactly solvable models.

The geometry underlying classical integrable systems is reflected in Sato's  and Segal-Wilson's constructions of the infinite dimensional {\it Grassmannian} $\Gr$. Based on the idea that the evolution of shapes in the plane is related to an evolution in a general universal space, the Segal-Wilson Grassmannian in our case, we provide
an embedding of the L\"owner-Kufarev evolution into a fiber bundle with the cotangent bundle over $\mathcal F_0$ as a base space, and with the smooth Grassmannian $\Gr_{\infty}$ as a typical fiber. Here $\mathcal F_0$ denotes the space of all conformal embeddings $f$ of the unit disk into $\mathbb C$ normalized by $f(z)=z\left(1+\sum_{n=1}^{\infty}c_nz^n\right)$ smooth on the boundary $S^1$, and under the  {\it smooth Grassmannian} $\Gr_{\infty}$ we understand  a dense subspace $\Gr_{\infty}$ of $\Gr$ defined further in Section 4.

 We intent to keep the paper self-sufficient, and its structure  is as follows. Section 2 provides the reader with necessary definitions and structures of the Virasoro-Bott group, of the group $\Diff S^1$ of orientation preserving diffeomorphisms of the unit circle $S^1$, and of Kirillov's homogeneous manifold $\Diff S^1/ S^1$, as well as of their infinitesimal descriptions. In Section 3 we relate all three manifolds to spaces of analytic functions, and following Kirillov and Yur'ev~\cite{Unirreps, KYu}, give a description of the Virasoro generators as vectors of the tangent space to the space of smooth univalent functions at an arbitrary point. A brief definition of the Segal-Wilson Grassmannian $\Gr$ and of the smooth
Grassmannian $\Gr_{\infty}$ is given in Section 4. We provide some necessary background of L\"owner-Kufarev smooth evolution in Section 5. Then in Section 6, we construct Hamiltonian formalism for the L\"owner-Kufarev evolution and define the Poissoon structure. The main result is contained in Section 7 where we construct the embedding of the L\"owner-Kufarev evolution into the Segal-Wilson Grassmannian.  We prove that
the Virasoro generators in their covariant form are conserved along the Hamiltonian flow. Then we present the $\tau$-function in Section 8. Section 9 gives the relation of the shape evolution to integrable systems. We construct the Baker-Akhiezer function, define the KP flows, and finally, we find explicitly a class of potentials in the Lax operator, which  satisfy the KP equation.

\vspace{10pt}

\noindent
{\bf Acknowledgement.} The authors are thankful for the support of NILS mobility project and prof. Fernando P\'erez-Gonz\'alez (Universidad de La Laguna, Espa\~na), the University of Chicago and prof. Paul Wiegmann  for support and helpful discussions. The authors acknowledge also helpful discussions with prof. Roland Friedrich during his visit to the University of Bergen in 2008.

\section{Definitions and structures of $Vir$, $\Diff S^1$,  and $\Diff S^1/S^1$}\label{GrAl}

\subsection{Witt and Virasoro algebras}\label{WV}

The complex {\it Witt algebra} is the Lie algebra of holomorphic vector fields defined on $\mathbb C^*=\mathbb C\setminus\{0\}$ acting by derivation
over the ring of Laurent polynomials $\mathbb C[z,z^{-1}]$. It is spanned by the basis ${L}_n=z^{n+1}\frac{\partial}{\partial z}$, $n\in \mathbb Z$.
The Lie bracket of two basis vector fields is given by the commutator $[L_n,L_m]=(m-n)L_{n+m}$. Its central extension is the complex {\it Virasoro algebra} $\mathfrak{vir}_{\mathbb C}$ with the central element $c$ commuting with all $L_n$, $[L_n,c]=0$, and with the Virasoro commutation relation
\[
[L_n,L_m]=(m-n)L_{n+m}+\frac{c}{12}n(n^2-1)\delta_{n,-m}, \quad n,m\in\mathbb Z,
\]
where $c\in \mathbb C$ is the central charge denoted by the same character. These algebras play important role in conformal field theory. In order to construct their representations   one can use an analytic realization.

\subsection{Group of diffeomorphisms}

Let us denote by $\Diff S^1$ the group of orientation preserving $C^{\infty}$ diffeomorphisms
of the unit circle $S^1$, where the group operation is given by the superposition of diffeomorphisms, the identity element of the group is the identity map on the circle, and the inverse element is the inverse diffeomorphism. Topologically the group $\Diff S^1$ is an open subset of the space of smooth functions on the unit circle $C^{\infty}(S^1\to S^1)$, endowed with the $C^{\infty}$-topology. This allows us to consider the group $\Diff S^1$ as a Lie-Frech\'et group.
The corresponding Lie-Frech\'et algebra $\mathfrak{diff} \,\, S^1$ is identified with the tangent space $T_{\id}\Diff S^1$ at the identity~$\id$, and it inherits the Frech\'et topology from $C^{\infty}(S^1\to S^1)$. In its turn $T_{\id}\Diff S^1$ can be thought of as the set of all velocity vectors of smooth curves at time zero passing through~$\id$. Every such velocity vector is just a smooth real vector field on $S^1$. Denote by $\Vect S^1=\{\phi=\phi(\theta)\frac{d}{d\theta}\mid \phi\in C^{\infty}(S^1\to\mathbb R)\}$ the space of smooth real vector fields on the circle. This construction allows us to identify the Lie-Frech\'et algebra $\mathfrak{diff} \,\, S^1$ of $\Diff S^1$ with the space $\Vect S^1$ equipped with the Lie brackets $[\phi_1(\theta)\frac{d}{d\theta},\phi_2(\theta)\frac{d}{d\theta}]$, see e.g.;~\cite{Milnor}.

The Virasoro-Bott group $Vir$ is the central extension of the group $\Diff S^1$ by the group of real numbers $\mathbb R$. This central extension is given by the Bott continuous cocycle~\cite{Bott}, which is a map $\Diff S^1\times \Diff S^1\to S^1$ of the form
\[
(\varphi_1,\varphi_2)\to \frac{1}{2}\int_{S^1}\log (\varphi_1\circ\varphi_2)' d\log \varphi_2'.
\]
The Lie algebra $\mathfrak{vir}$ for $Vir$ is called the (real) Virasoro algebra and it is given by the central extension of the Lie-Frech\'et algebra $\Vect S^1$ by the algebra of real numbers. The central extension is unique nontrivial modulo isomorphisms and is given by the Gelfand-Fuchs 2-cocycle~\cite{GelfandFuchs}
\[
\omega(\phi_1,\phi_2)=\int_{S_1}\phi_1'(\theta)\phi_2''(\theta)d\theta.
\]
Both groups $\Diff S^1$ and $Vir$ are modeled over a real Fr\'echet space.

Let us denote by $[\id]$ the equivalence class in $\Diff S^1/S^1$ of the identity element $\id\in \Diff S^1$. Then $T_{[\id]}\Diff S^1/S^1$ is associated
with the quotient $\Vect_0 S^1=\Vect S^1/const$ of the algebra $\Vect S^1$ by the constant vector fields and can be realized as the space of vector fields $\phi(\theta)\frac{d}{d\theta}$ from  $\Vect S^1$ with vanishing mean value over~$S^1$. All constant vector fields form the equivalence class $[0]$.

\subsection{CR and complex structures}\label{CR}

In Section~\ref{relations} we shall describe relations between the groups $Vir$, $\Diff S^1$, the homogeneous manifold $\Diff S^1/S^1$ and different spaces of univalent functions. The algebraic objects are essentially real, meanwhile the spaces of univalent functions carry natural complex structures as well as the  algebraic definition of the Witt and Virasoro algebras in Subsection~\ref{WV} considers vector fields over the field of complex numbers. Therefore, we need to complexify the real objects in order to present these relations.
Structures and mappings on infinite dimensional manifolds are more general than for finite-dimensional ones, however, being restricted to the latter they coincide with the standard ones. For the completeness we give some necessary definitions mostly based on~\cite{Boggess, Lempert}.

Given a smooth manifold $\mathcal M$,  we consider  the tangent space $T_p\mathcal M$ at each point $p\in \mathcal M$ as a real vector space. After tensoring with $\mathbb C$ and splitting $T_p\mathcal M\otimes\mathbb C=T_p^{(1,0)}\mathcal M\oplus T_p^{(0,1)}\mathcal M$, we form the holomorphic $T^{(1,0)}\mathcal M$ and antiholomorphic $T^{(0,1)}\mathcal M$ tangent bundles. The pair $(\mathcal M,T^{(1,0)}\mathcal M)$ is an almost complex manifold which becomes complex in the integrable case meaning that any commutator of vector fields from $T^{(1,0)}\mathcal M$ remains in $T^{(1,0)}\mathcal M$, and similarly, the commutators of vector fields from $T^{(0,1)}\mathcal M$ remain in $T^{(0,1)}\mathcal M$.

A Lie group $\mathbb G$ with a neutral element $e$ and with a Lie algebra $\mathfrak g$ possesses a left invariant complex structure $(\mathbb G, \mathfrak g^{(1,0)})$ if one can construct a complexification $\mathfrak{g}_{\mathbb C}=(T_{e}\mathbb G)_{\mathbb C}$ of the Lie algebra $\mathfrak g$, such that the decomposition $\mathfrak{g}_{\mathbb C}=\mathfrak g^{(1,0)}\oplus\mathfrak g^{(0,1)}$ is integrable, that is equivalent to say that $\mathfrak g^{(1,0)}$ is a subalgebra.

Let us recall the definition of the Cauchy-Riemann (CR) structure on a manifold $\mathcal N$. Given a smooth manifold $\mathcal N$ and its complexified tangent bundle
$T\mathcal N\otimes \mathbb C$ we find a complex corank one subbundle $H$ of $T\mathcal N\otimes \mathbb C$. The splitting $H=H^{(1,0)}\oplus H^{(0,1)}$ defines an almost CR structure. If it is integrable, then the pair $(\mathcal N, H^{(1,0)})$ is called a CR manifold. Roughly speaking the holomorphic part of a CR structure represents a maximal subbundle of the real tangent bundle that admits a complex structure.
The left-invariant CR-structure $(\mathbb G, \mathfrak h^{(1,0)})$ is defined similarly to the left-invariant complex structure above.

As an example of CR manifold we can consider an embedded real hypersurface (that is an embedded real corank 1 submanifold) into a complex manifold. Namely, let $\mathcal N$ be a real hypersurface of the complex manifold $(\mathcal M, T^{(1,0)}\mathcal M)$. Then the CR manifold $(\mathcal N, H^{(1,0)})$ is defined by setting $H^{(1,0)}=T^{(1,0)}\mathcal M\Big|_{\mathcal N}\bigcap (T\mathcal N\otimes \mathbb C)$.

A CR manifold $(\mathcal N, H^{(1,0)})$ is called pseudoconvex if $[X,\bar X]\notin H^{(1,0)}\oplus H^{(0,1)}$ for any non-vanishing vector field $X\in H^{(1,0)}$.

A smooth mapping  $F$ from a complex manifold $(\mathcal M_1, T^{(1,0)}\mathcal M_1)$ to a complex manifold $(\mathcal M_2, T^{(1,0)}\mathcal M_2)$ is called holomorphic if the holomorphic part $\partial F$ of its differential $dF= \partial F+\bar{\partial} F$ is the mapping $\partial F: T^{(1,0)}\mathcal M_1\to T^{(1,0)}\mathcal M_2$ and $\bar\partial F=0$.
The problem of solving the equation $\bar\partial F=0$ is quite difficult. Some of  results in this direction are found, e.g.;~\cite{Raboin}.

Analogously, a smooth mapping $F$  from a CR manifold $(\mathcal N_1, H^{(1,0)}_1)$ to a CR manifold  $(\mathcal N_2, H^{(1,0)}_2)$ is called CR if  its holomorphic differential is a  map $\partial F: H^{(1,0)}_1\to H^{(1,0)}_2$ and $\bar\partial F=0$.

Given a non-trivial representative $\phi$ of the equivalence class $[\phi]$ of  $\Vect_0S^1$
\[
\phi(\theta)=\sum\limits_{n=1}^{\infty}a_n\cos\,n\theta+b_n\sin\,n\theta,
\]
let us define an almost complex structure $J$ by the operator
\[
J(\phi)(\theta)=\sum\limits_{n=1}^{\infty}-a_n\sin\,n\theta+b_n\cos\,n\theta.
\]
Then $J^2=-id$. On ${\Vect_0}_{\mathbb C}:=\Vect_0S^1\otimes \mathbb C$, the operator $J$ diagonalizes and we have the isomorphism
\[
\Vect_0S^1\ni \phi\leftrightarrow v:=\frac{1}{2}(\phi-iJ(\phi))=\sum\limits_{n=1}^{\infty}(a_n-ib_n)e^{in\theta}\in H^{(1,0)}:= (\Vect_0S^1\otimes \mathbb C)^{(1,0)},
\]
and the latter series extends into the unit disk as a holomorphic function. So $\Diff_{\mathbb C}S^1/S^1= (\Diff S^1/S^1, H^{(1,0)})$ becomes a complex manifold and $(\Diff S^1, H^{(1,0)})$ becomes a CR manifold where $H^{(1,0)}$ is isomorphic to $\Vect_0 S^1$. Thus, the group $\Diff S^1$ possesses the left-invariant CR-structure $(\Diff S^1, H^{(1,0)})$, and $C^*$ forms a Cartan subalgebra of $\Vect S^1\otimes \mathbb C= (H^{(1,0)}\oplus  H^{(0,1)})\oplus \mathbb C^*$. Taking the complex Fourier basis $v_n=e^{in\theta}\frac{d}{d\theta}$, $n\in \mathbb Z$, in $\Vect S^1\otimes \mathbb C$ we arrive at the Witt commutation relations $[v_n, v_m]=(m-n)v_{n+m}$, where the commutators $[v_n, v_m]$ remain in $H^{(1,0)}$ for $n,m>0$ and in $H^{(0,1)}$ for $n,m<0$, however the Lie hull Lie$(H^{(1,0)},H^{(0,1)})\not\subset H^{(1,0)}\oplus H^{(0,1)}$.

\section{Relations between $Vir$,  $\Diff S^1$, and $\Diff S^1/S^1$ and spaces of univalent functions}\label{relations}

Let us introduce necessary classes of univalent functions in order to formulate main statements.
Let $\mathcal A_0$ and $\widetilde{\mathcal A}_0$ denote the classes of holomorphic functions in the unit disk $\mathbb D$ defined by$$
\mathcal A_0=\{f\in C^{\infty}(\hat{\mathbb D})\ \mid\ f\in \Hol(\mathbb D),\  f(0)=0\},\qquad \widetilde{\mathcal A}_0=\{f\in \mathcal A_0\mid\   f^{\prime}(0)=0\},
$$
where $\hat{\mathbb D}$ is the closure of the unit disk $\mathbb D$.
The classes $\mathcal A_0$ and $\widetilde{\mathcal A}_0$ are complex Frech\'et vector spaces, where the topology is defined by the seminorms
$$
\|f\|_m=\sup \{ |f^{(m)}(z)| \ \mid\ z\in\hat{\mathbb D}\},
$$ which is equivalent to the uniform convergence of all derivatives in $\hat{\mathbb D}$. Notice that both $\mathcal A_0$ and $\widetilde{\mathcal A}_0$ can be considered as complex manifolds where the real tangent space naturally isomorphic to the holomorphic part of the splitting.
Then we define
$$
\mathcal F= \{ f\in A_0\ \mid\ f\ \ \text{is univalent in}\ \ \mathbb D,\ \ \text {injective and smooth on the boundary}\ \ \partial{\mathbb D}\}.
$$
Geometrically, class $\mathcal F$ defines all differentiable embeddings of the closed disk $\hat{\mathbb D}$ to $\mathbb C$ and analytically it is represented by functions $f=cz(1+\sum_{n=1}^{\infty}c_nz^n)$, $c,c_n\in\mathbb C$. As a subset of $\mathcal A_0$ the space of univalent functions $\mathcal F$ forms an open subset inheriting the Frech\'et topology of complex vector space~$\mathcal A_0$.
Next we consider the class
$$
\mathcal F_1=\{f\in \mathcal F\ \mid\ |f^{\prime}(0)|=1\},
$$ whose elements can be written as $f=e^{i\phi}z(1+\sum_{n=1}^{\infty}c_nz^n)$, $\phi\in\mathbb R\mod 2\pi$. The set $\mathcal F_1$ is the pseudo-convex surface of real codimension $1$ in the complex open set $\mathcal F\subset \mathcal A_0$.

The last class of functions is
$$
\mathcal F_0=\{f\in \mathcal F\ \mid\ f^{\prime}(0)=1\}.
$$
The elements of this class have the form $f=z(1+\sum_{n=1}^{\infty}c_nz^n)$. It is obvious that $\mathcal F_0$ can be considered both as the quotient $\mathcal F_1/S^1$ and as the quotient $\mathcal F/\mathbb C^*$, $\mathbb C^*=\mathbb C\setminus\{0\}$. In the latter case, $\mathcal F$ is the holomorphic trivial $\mathbb C^*$-principal bundle over the base space $\mathcal F_0$.
Since the set $\mathcal F_0$ can be also considered as an open subset of the affine space $v+\widetilde{\mathcal A}_0$, where $v(z)=z$, the tangent space $T_f\mathcal F_0$ inherits the natural complex structure of complex vector space~$\widetilde{\mathcal A}_0$~\cite{AM}. The tangent space $T_f\mathcal F_0$ with the induced complex structure from $\widetilde{\mathcal A}_0$ is isomorphic to the complex vector space $T_f^{(1,0)}\mathcal F_0$ of the complexification
$T\mathcal F_0\otimes \mathbb  C=T^{(1,0)}\mathcal F_0\oplus T^{(0,1)}\mathcal F_0$. Moreover, the affine coordinates can be introduced so that  to every $f\in\mathcal F_0$, written in the form $f(z)=z(1+\sum_{n=1}^{\infty}c_nz^n)$ there will correspond the sequence $\{c_n\}_{n=1}^{\infty}$.

\begin{theorem}\cite{Lempert}\label{vir}
The Virasoro-Bott group $Vir$ has a left invariant complex structure, and as a complex manifold $Vir_{\mathbb C}$, it is biholomorphic to $\mathcal F$.
\end{theorem}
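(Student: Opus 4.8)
The plan is to establish the biholomorphism between $Vir_{\mathbb C}$ and $\mathcal F$ by factoring it through the intermediate objects already equipped in the excerpt, namely $\Diff S^1$ and the homogeneous manifold $\Diff S^1/S^1$. The key observation is that $Vir$ is a central extension of $\Diff S^1$ by $\mathbb R$, while $\mathcal F$ is a $\mathbb C^*$-principal bundle over $\mathcal F_0$; both are therefore "one-complex-dimensional thickenings" of their respective bases. I would first recall the classical Kirillov correspondence (from Section~\ref{relations}, following Kirillov-Yur'ev~\cite{Unirreps, KYu}) which identifies $\Diff S^1/S^1$ with $\mathcal F_0$ as complex manifolds: to a diffeomorphism of $S^1$ one associates, via conformal welding, the univalent map $f\in\mathcal F_0$ whose boundary correspondence it realizes. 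This welding map is the geometric heart of the argument.

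\emph{First step.} I would make precise the identification $\Diff_{\mathbb C}S^1/S^1 \cong \mathcal F_0$ on the level of complex structures. The complex structure $J$ on $\Vect_0 S^1$ defined in Subsection~\ref{CR} sends $\phi(\theta)=\sum_{n\ge 1}(a_n\cos n\theta + b_n\sin n\theta)$ to the holomorphic boundary value $v=\sum_{n\ge 1}(a_n-ib_n)e^{in\theta}$, and I must check that the differential of the welding map intertwines this $J$ with the complex structure on $T_f\mathcal F_0$ inherited from $\widetilde{\mathcal A}_0$. Since $T_{[\mathrm{id}]}\Diff S^1/S^1 \cong \Vect_0 S^1 \cong H^{(1,0)}$ and $T_f\mathcal F_0 \cong T_f^{(1,0)}\mathcal F_0$ via the affine coordinates $\{c_n\}_{n=1}^\infty$, the two tangent spaces carry matching holomorphic pieces; the claim is that the welding map is holomorphic, i.e. $\bar\partial F = 0$ in the sense defined above. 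By left-invariance of the CR/complex structure on $\Diff S^1/S^1$ it suffices to verify this at the base point $[\mathrm{id}]$, which reduces to a computation on Fourier modes.

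\emph{Second step.} I would then lift this identification one dimension up. On the univalent side, $\mathcal F \to \mathcal F_0$ is the trivial holomorphic $\mathbb C^*$-bundle with fiber coordinate $c = f'(0)$, so $\mathcal F \cong \mathcal F_0 \times \mathbb C^*$ as complex manifolds. On the Virasoro side, I would exhibit a corresponding one-complex-parameter fiber for $Vir_{\mathbb C}$: the real central $\mathbb R$-extension together with the $S^1$ that is quotiented out in passing from $\Diff S^1$ to $\Diff S^1/S^1$ combine, after complexification, into a copy of $\mathbb C^* \cong \mathbb R \times S^1$ (via $c = e^{i\phi + \text{(central coordinate)}}$). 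The task is to show that the left-invariant complex structure on $Vir$, whose existence and holomorphic tangent subalgebra $\mathfrak g^{(1,0)}$ are guaranteed by the criterion in Subsection~\ref{CR}, makes this fiber holomorphic over the base $\mathcal F_0$, so that the fiberwise map $\mathbb C^* \to \mathbb C^*$ is biholomorphic and compatible with the base identification from the first step. Assembling the two steps gives a bijective holomorphic map with holomorphic inverse.

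\emph{The main obstacle} I anticipate is the analytic control of the conformal welding correspondence and, in particular, verifying that it is genuinely biholomorphic rather than merely a smooth bijection respecting the almost complex structures: one must confirm that the integrability established in Subsection~\ref{CR} (that $\mathfrak g^{(1,0)}$ is a subalgebra, so the almost complex structure is a bona fide complex structure) transfers across the welding, and that smoothness up to the boundary $S^1$ — which is what places $f$ in $\mathcal F$ rather than merely in the space of univalent functions — is preserved in both directions. This is exactly where the infinite-dimensional Fr\'echet setting makes the argument delicate, since the inverse function theorem is not automatic; I would lean on Lempert's analysis~\cite{Lempert}, as the theorem is attributed, to supply the regularity of the welding and its inverse, reducing my remaining work to the explicit matching of complex structures on the holomorphic tangent spaces carried out in the first two steps.
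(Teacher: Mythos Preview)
Your proposal is correct and follows essentially the same strategy as the paper's sketch: the paper notes that $\mathcal F$ is a holomorphic trivial $\mathbb C^*$-principal bundle over $\mathcal F_0$, that Lempert~\cite{Lempert} shows $Vir_{\mathbb C}$ is likewise a holomorphic trivial $\mathbb C^*$-principal bundle over $\Diff_{\mathbb C} S^1/S^1$, and then invokes the base biholomorphism $\Diff_{\mathbb C} S^1/S^1\cong\mathcal F_0$ (Theorem~\ref{Th3}) to conclude. Your two-step decomposition (base identification via conformal welding, then matching the $\mathbb C^*$-fibers) is exactly this argument spelled out in slightly more detail, and your honest deferral of the hard regularity issues to Lempert's analysis is appropriate, since the paper itself treats this as a cited result rather than proving it in full.
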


\begin{theorem}\cite{Lempert}
The group $\Diff S^1$ has a left invariant CR structure and with this CR structure it is isomorphic to the hypersurface $\mathcal F_1$.
\end{theorem}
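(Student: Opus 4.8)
The plan is to derive the statement from the preceding Theorem~\ref{vir} by transporting the biholomorphism $Vir_{\mathbb C}\cong\mathcal F$ down to the underlying real (CR) structures. The organizing observation is that both sides are one-real-dimensional extensions of the same base. On the group side, $Vir$ is the central extension $0\to\mathbb R\to Vir\xrightarrow{\pi}\Diff S^1\to 0$, so $\Diff S^1=Vir/\mathbb R$, and further $\Diff S^1/S^1\cong\mathcal F_0$ with $S^1$ the rigid rotations. On the function side, $\mathcal F$ is the trivial $\mathbb C^*$-bundle $f\mapsto f'(0)$ over $\mathcal F_0$, and $\mathcal F_1=\{\,|f'(0)|=1\,\}=\mathcal F/\mathbb R_{>0}$ is the quotient by the positive dilations $f\mapsto\lambda f$, $\lambda\in\mathbb R_{>0}$. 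Thus I would first match the two fibrations: the extra real line is the centre on one side and $\log|f'(0)|$ on the other, while the circle is the rotation subgroup on one side and $\arg f'(0)$ on the other.

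The core step is to show that the biholomorphism $\Phi\colon Vir_{\mathbb C}\to\mathcal F$ of Theorem~\ref{vir} intertwines the central $\mathbb R$-action with the dilation $\mathbb R_{>0}$-action; then $\Phi$ descends to a diffeomorphism $\bar\Phi\colon Vir/\mathbb R=\Diff S^1\to\mathcal F/\mathbb R_{>0}=\mathcal F_1$. To see the intertwining I would examine the left-invariant complex structure produced in Theorem~\ref{vir} on the central line together with the Cartan direction $v_0$: its holomorphic part has the form $\mathfrak{vir}^{(1,0)}=\spn\{v_n:n\geq 1\}\oplus\mathbb C w$, where $w$ pairs $v_0$ with the central generator $c$, so that $c$ and $v_0$ span a single $J$-invariant complex tangent line of the fibre of $Vir\to\Diff S^1/S^1$. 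Under $\Phi$ this complex line must go to the complex tangent line of the $\mathbb C^*$-fibre of $\mathcal F\to\mathcal F_0$, and the only identification consistent with the common base $\mathcal F_0$ sends the real central direction to $\log|f'(0)|$ and the rotation $v_0$ to $\arg f'(0)$. This is exactly the intertwining of the two real actions, and it simultaneously gives $\Phi(\{c\text{-coordinate}=0\})=\{|f'(0)|=1\}=\mathcal F_1$.

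It remains to identify the CR structures. Since $\Phi$ is a biholomorphism and $\Diff S^1$, $\mathcal F_1$ are corresponding real hypersurfaces (equivalently, quotients by corresponding real flows whose orbit directions together with their $J$-images fill the fibre), the induced CR distributions correspond automatically: $d\Phi$ carries $T^{(1,0)}Vir$ to $T^{(1,0)}\mathcal F$ and $T\Diff S^1\otimes\mathbb C$ to $T\mathcal F_1\otimes\mathbb C$, hence carries their intersection $H^{(1,0)}$ to $H^{(1,0)}$, making $\bar\Phi$ a CR isomorphism. The one point that genuinely requires verification, and which I expect to be the main obstacle, is that this CR structure induced on $\Diff S^1$ as a hypersurface of $(Vir,T^{(1,0)}Vir)$ coincides with the intrinsic left-invariant CR structure constructed in Section~\ref{GrAl}, whose holomorphic part is $H^{(1,0)}\cong\Vect_0S^1=\spn\{v_n:n\geq 1\}$. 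By left-invariance it suffices to check this in one tangent space, say at the identity, where it reduces to comparing the splitting $\mathfrak{vir}_{\mathbb C}=\mathfrak{vir}^{(1,0)}\oplus\mathfrak{vir}^{(0,1)}$ underlying Theorem~\ref{vir} with the splitting $\Vect S^1\otimes\mathbb C=(H^{(1,0)}\oplus H^{(0,1)})\oplus\mathbb C v_0$ recorded at the end of Subsection~\ref{CR}; concretely one must confirm that each $v_n$, $n\geq 1$, is tangent to $\Diff S^1$ and complex-linearly independent of the transverse central direction.

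Finally I would note the consistency with the differential geometry already recorded: $\mathcal F_1$ is the pseudoconvex surface of real codimension one in $\mathcal F$, which matches the fact that the left-invariant CR structure on $\Diff S^1$ satisfies $[X,\bar X]\notin H^{(1,0)}\oplus H^{(0,1)}$ for nonvanishing $X\in H^{(1,0)}$; hence $\bar\Phi$ is an isomorphism of pseudoconvex CR manifolds, as claimed.
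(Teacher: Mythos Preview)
The paper does not supply its own proof of this statement; it is quoted from Lempert \cite{Lempert} without argument. So there is no in-paper proof to compare against directly. What the paper \emph{does} say, in the paragraph following the three theorems, is that Lempert's route to Theorem~\ref{vir} passes through the $\mathbb C^*$-bundle structure of $Vir_{\mathbb C}$ over $\Diff_{\mathbb C}S^1/S^1$ together with Theorem~\ref{Th3}. In other words, in Lempert's logic the CR identification $\Diff S^1\cong\mathcal F_1$ (and the complex identification $\Diff S^1/S^1\cong\mathcal F_0$) are \emph{inputs} to the proof of Theorem~\ref{vir}, not consequences of it. Your plan runs the implication in the opposite direction, taking Theorem~\ref{vir} as a black box and descending; this is a legitimate alternative strategy provided the descent can actually be justified, but you should be aware that it inverts the order of the source.

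There is a genuine gap in the descent. You need that the biholomorphism $\Phi\colon Vir_{\mathbb C}\to\mathcal F$ intertwines the central $\mathbb R$-translation with the positive dilation $f\mapsto\lambda f$. Your justification is that ``the only identification consistent with the common base $\mathcal F_0$ sends the real central direction to $\log|f'(0)|$''. That is not true without further input: even if $\Phi$ is a holomorphic $\mathbb C^*$-bundle map over the common base (which itself requires the argument the paper sketches, i.e.\ essentially Theorem~\ref{Th3}), a holomorphic bundle isomorphism may differ from the ``obvious'' one by multiplication by a nowhere-vanishing holomorphic function $h\colon\mathcal F_0\to\mathbb C^*$ on the base. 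Such a twist rotates the fibre coordinate by $\arg h$ and rescales by $|h|$, so the image of the central direction $c$ need not be the pure $\log|f'(0)|$ direction; it can acquire an $\arg f'(0)$ component that varies over the base. Consequently $\Phi$ need not carry the hypersurface $\{c\text{-coordinate}=0\}$ to $\{|f'(0)|=1\}$, and your identification of $\Diff S^1$ with $\mathcal F_1$ as ``corresponding real hypersurfaces'' is not yet established. To close this you must either (i) exhibit a specific $\Phi$ with the intertwining property---which in practice means reproducing Lempert's construction, hence reproving the theorem rather than deducing it---or (ii) argue at the level of CR-equivalence classes that any holomorphic retrivialisation of the $\mathbb C^*$-fibre induces a CR isomorphism between the two circle bundles, which is a separate (and not entirely obvious) lemma. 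A smaller point: you slide between the quotient description $\Diff S^1=Vir/\mathbb R$ and a hypersurface description inside $Vir_{\mathbb C}$; make explicit which model you use and why the CR structure it inherits agrees with the intrinsic left-invariant one defined in Subsection~\ref{CR}.
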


The last theorem concerns with the homogeneous space $\Diff S^1/S^1$, where $S^1$ is considered as a subgroup of $\Diff S^1$. The group $S^1$ acts transversally to CR structure of $\Diff S^1$, leaving it invariant. 

\begin{theorem}\label{Th3}\cite{KYu,Lempert}
The homogeneous space $\Diff S^1/S^1$ has a complex structure, and as a complex manifold $\Diff_{\mathbb C} S^1/S^1$,  is biholomorphic to $\mathcal F_0$.
\end{theorem}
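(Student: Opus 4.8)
The plan is to realize the biholomorphism explicitly through conformal welding and then to identify the two complex structures by means of the Kirillov vector fields already introduced in Section~\ref{relations}. Given $f\in\mathcal F_0$, set $\Omega=f(\mathbb D)$ and let $\Gamma=f(S^1)$ be the resulting smooth Jordan curve. Let $g$ map the exterior disk $\Delta=\{|z|>1\}\cup\{\infty\}$ conformally onto the exterior $\mathbb C\cup\{\infty\}\setminus\bar\Omega$ with $g(\infty)=\infty$; this $g$ is unique up to a rotation $g\mapsto g\circ R_\beta$. I would then define $\phi=g^{-1}\circ f\colon S^1\to S^1$, an orientation preserving diffeomorphism that is determined up to left composition with a rotation, so that the class $[\phi]\in\Diff S^1/S^1$ is well defined, and put $\Phi(f)=[\phi]$. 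First I would verify that $\Phi$ is a bijection: injectivity and surjectivity follow from the uniqueness of the conformal welding of a smooth Jordan curve, while the equivalence ``smooth curve $\leftrightarrow$ smooth welding $\leftrightarrow$ boundary-smooth univalent function'' rests on the Kellogg--Warschawski boundary regularity theorem, which guarantees that $f$, $g$, and hence $\phi$ are $C^\infty$ precisely when $\Gamma$ is.

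Next I would recall the two complex structures to be matched. From Subsection~\ref{CR}, $T_{[\id]}\Diff S^1/S^1\cong\Vect_0 S^1$ carries the structure $J$ whose holomorphic part $H^{(1,0)}$ is spanned by $e^{in\theta}$, $n\ge1$, and this structure is extended invariantly by the transitive $\Diff S^1$-action. On the other side $T_f\mathcal F_0\cong\widetilde{\mathcal A}_0$ inherits the natural vector-space complex structure, with $T^{(1,0)}_f\mathcal F_0$ identified with $\widetilde{\mathcal A}_0$, the closed span of $z^2,z^3,\dots$. Since $\mathcal F_0$ is not a priori a homogeneous space, these two structures are unrelated at the outset, and the content of the theorem is exactly that $\Phi$ intertwines them.

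The crux is an infinitesimal welding computation, for which I would invoke the explicit description of the Virasoro generators as tangent vector fields on $\mathcal F_0$ developed in Section~\ref{relations}. Because the structure on the source is invariant under the transitive action, the fundamental vector fields generated by $v_n=e^{in\theta}\frac{d}{d\theta}$, $n\ge1$, span $H^{(1,0)}$ at \emph{every} point. The decisive identity is that $\Phi$ pushes these forward to the Kirillov fields $L_n[f]=z^{n+1}f'(z)$. For $n\ge1$ this function vanishes to order $\ge2$ at the origin, since $f'(z)=1+O(z)$, so $L_n[f]\in\widetilde{\mathcal A}_0=T^{(1,0)}_f\mathcal F_0$, and the fields $L_n$, $n\ge1$, span that holomorphic tangent space topologically (as $f'(0)\neq0$). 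Hence $d\Phi(H^{(1,0)})\subseteq T^{(1,0)}\mathcal F_0$ at each point. As $d\Phi$ is a real-linear isomorphism that commutes with conjugation, it then also maps $H^{(0,1)}$ into $T^{(0,1)}\mathcal F_0$ and is therefore $\mathbb C$-linear, so $\Phi$ is biholomorphic with biholomorphic inverse.

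The hard part is this infinitesimal welding identity $d\Phi(v_n)=L_n$: one differentiates the welding relation $f=g\circ\phi$ along a flow on $S^1$ and reads off the variation $\dot f$ by solving the resulting jump (Riemann--Hilbert) problem on $S^1$, the step where a Cauchy--Hilbert type transform enters and where the splitting into holomorphic and antiholomorphic Fourier modes becomes visible. A secondary, genuinely infinite-dimensional difficulty is to establish that $\Phi$ is a diffeomorphism of Fréchet manifolds, since smoothness of the inverse is not automatic and must be extracted from the regularity of the welding together with the explicit, pointwise-invertible form of $d\Phi$.
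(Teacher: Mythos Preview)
The paper does not supply its own proof of this theorem; it is quoted as a result of Kirillov--Yur'ev \cite{KYu} and Lempert \cite{Lempert}, and the surrounding text only records the infinitesimal content: the Schaeffer--Spencer variation $L[f,\phi]$ and the identification \eqref{iso}, $L_k[f]=z^{k+1}f'(z)$ for $k\ge1$. Your outline is precisely the Kirillov--Yur'ev route via conformal welding, and the ``infinitesimal welding identity'' you flag as the hard step is exactly what the Schaeffer--Spencer residue formula in Section~\ref{relations} computes. So at the level of strategy there is nothing to compare: you are reconstructing the cited proof, not offering an alternative.

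One correctable slip: you define $\Phi\colon\mathcal F_0\to\Diff S^1/S^1$ by $f\mapsto[\,g^{-1}\circ f\,]$, but then assert that ``$\Phi$ pushes these forward to the Kirillov fields $L_n[f]$''. That is the wrong direction: the $v_n$ live on $\Diff S^1/S^1$ and the $L_n$ on $\mathcal F_0$, so it is $d\Phi^{-1}$ (or the pullback along $\Phi$) that sends $v_n$ to $L_n$. Equivalently, the $L_n$ are the fundamental vector fields of the right $\Diff S^1$-action on $\mathcal F_0$ transported through the welding bijection, which is how the paper presents them. With this fixed, your conclusion $d\Phi^{-1}(H^{(1,0)})\subset T^{(1,0)}\mathcal F_0$ follows as you say. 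The remaining analytic points you single out---uniqueness and smooth dependence of the welding, and Fr\'echet smoothness of the inverse---are genuine and are handled in the cited references (Lempert in particular), not in this paper.
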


It can be shown that $\Diff S^1/S^1$ admits not only complex but even K\"ahlerian structure.
Entire necessary background for the construction of the theory of unitary representations of $\Diff S^1$ is found in~\cite{AM,KYu}.

It was mentioned that $\mathcal F$ is the holomorphic trivial $\mathbb C^*$-principal bundle over $\mathcal F_0$. In order to prove Theorem~\ref{vir}, Lempert showed \cite{Lempert} that the complexification $Vir_{\mathbb C}$ of the Virasoro-Bott group $Vir$  is also a holomorphic trivial $\mathbb C^*$-principal bundle over $\Diff_{\mathbb C} S^1/S^1$. This implies the existence of a biholomorphic map between $\mathcal F$ and~$Vir_{\mathbb C}$.

We will assign the same character $\mathcal F_0$ to both, the class of univalent functions defined in the closure unit disk   $\mathcal F_0(\hat{\mathbb D})$, and the class of functions restricted to the unit circle  $\mathcal F_0(S^1)$. Obviously both classes are isomorphic. 

The right action of the group $\Diff S^1$ over the manifold $\Diff S^1/S^1$ is well defined and it gives the right action
$\Diff S^1$ over the class  $\mathcal F_0(S^1)$ due to Theorem~\ref{Th3}, which is technically impossible to write explicitly because
the Riemann mapping theorem gives no explicit formulas. However, it is possible \cite{KYu} to write the infinitesimal generator making use of the Schaeffer and Spencer variation \cite[page 32]{SS}
$$
L[f,\phi](z):=\frac{f^2(z)}{2\pi
}\int\limits_{S^1}\left(\frac{w f'(w)}{f(w)}\right)^2\frac{ \phi(w)\,dw}{w(f(w)-f(z))}\in T_{f}\mathcal F_0,
$$
defined for $f\in\mathcal F_0$, $\phi\in\Vect S^1$. It extends by linearity to a map $L[f,\cdot]:\,\Vect_{\mathbb C}S^1\to T_f\mathcal F_0\otimes \mathbb C=T^{(1,0)}_f\mathcal F_0\oplus T^{(0,1)}_f\mathcal F_0$. The variation $L[f,\cdot]$ defines  the isomorphism of vector spaces
$H^{(1,0)}\leftrightarrow T^{(1,0)}_f\mathcal F_0$, which is given explicitly by \eqref{iso}. At the same time $L[f,\cdot]$ defines an isomorphism of the Lie algebras $H^{(1,0)}\leftrightarrow T^{(1,0)}_f\mathcal F_0$, where $H^{(1,0)}$ is considered as a  subalgebra of the Witt algebra $\Vect_{\mathbb C} S^1$ and  $T^{(1,0)}_f\mathcal F_0$ is endowed with the usual commutator of vectors. In order to obtain a homomorphism of the entire Witt algebra we 
 extend $L[f,\cdot]$ to $H^{(1,0)}\oplus H^{(0,1)}\oplus \mathbb C^*\to T^{(1,0)}_f\mathcal F_0$. 
 
 Explicitly, this homomorphism  $L[f,\cdot]$ is given by the residue calculus, see e.g.;~\cite{AM, Unirreps}.
Taking the holomorphic part of the Fourier basis ${v_k}=-iz^k$, $k=1,2,\dots$, for $\Vect S^1\otimes \mathbb C$, we obtain
\begin{equation}\label{iso}
L[f,v_k](z)=L_k[f](z)=z^{k+1}f'(z)\quad L_k[f]\in T_f^{(1,0)}\mathcal F_0,
\end{equation}
and taking the antiholomorphic part of the basis $v_{-k}=-iz^{-k}$, $k=1,2,\dots$, we obtain expressions for $L_{-k}[f]\in T_f^{(1,0)}\mathcal F_0$ that are rather difficult. The first two of them are
$$
L_{-1}[f](z)=f'(z)-2c_1 f(z)-1,\qquad L_{-2}[f](z)=\frac{f'(z)}{z}-\frac{1}{f(z)}-3c_1+(c_1^2-4c_2)f(z),
$$
and others can be obtained by the commutation relations~\cite{AM, KYu}
\begin{equation}\label{WittBr}
[L_k,L_n]=(n-k)L_{k+n}, \quad k,n\in \mathbb Z.
\end{equation}
The constant vector $v_0=-i$ is mapped to $L_0[f](z)=zf^{\prime}(z)-f(z)$. The vector fields $L_k$, $k\in\mathbb Z$ were obtained in~\cite{KYu} and received the name of Kirillov's vector fields, see also~\cite{AM}.
We have
$$
T^{(1,0)}_{\id}\mathcal F_0=\spn\{L_0[id], L_1[id], L_2[id],\dots\}=\spn\{z^2,z^3,\ldots\}.
$$
Let us recall that $\id\in \mathcal F_0$ is the image of an equivalence class of the identity diffeomorphism from $\Diff S^1/S^1$.

Summarizing, 
the group $\Diff S^1$ acts transitively on the homogeneous manifold
$\Diff S^1/S^1$ defining an action on the manifold $\mathcal F_0$. 
The infinitesimal generator of this action  produces left-invariant section
of the tangent bundle $T\mathcal F_0$ by the Schaeffer-Spencer linear map. We get the isomorphism
$$
T_f\mathcal F_0\simeq T_f^{(1,0)}\mathcal F_0=\spn\{L_1[f],L_2[f],\dots\},
$$
at a point $f\in\mathcal F_0$. The vector $L_0[f]$ is the image of the constant unit vector $i$ under the Schaeffer-Spencer linear map at an arbitrary point $f\in\mathcal F_0$ with value $0$ at $\id\in\mathcal F_0$.

The vector fields $L_k$, $k\in\mathbb Z$, at $f(z)=z\left(1+\sum_{n=1}^{\infty}c_nz^n\right)\in\mathcal F_0$ can be written in the affine coordinates $\{c_n\}_{n=1}^{\infty}$ by making use of the isomorphism $z^{n+1}\mapsto\partial_n$, where $\partial_n=\frac{\partial}{\partial c_n}$ as the following first order differential operators
$$
L_k[f]=\partial_k+\sum_{n=1}^{\infty}(n+1)c_n\partial_{k+n},\quad k>0,
$$
\begin{equation}\label{affine}
L_0[f]=\sum_{n=1}^{\infty}nc_n\partial_n,\qquad L_{-1}[f]=\sum_{n=1}^{\infty}\Big((n+2)c_{n+1}-2c_1c_n\Big)\partial_n,
\end{equation}
$$
 L_{-2}[f]=\sum_{n=1}^{\infty}\Big((n+3)c_{n+2}+(c_1^2-4c_2)c_{n}-\alpha_{(n+2)}\Big)\partial_n,
$$
where $\alpha_n$ can be found from the recurrent relations $\alpha_n=-\sum_{k=1}^{n}c_k\alpha_{n-k}$, $\alpha_0=1$. Here, for example,
$$\alpha_1=-c_1,\quad \alpha_2=c_1^2-c_2,\quad \alpha_3=-c_1^3+2c_1c_2-c_3,\quad\ldots.$$
For other negative values of $k$ the expressions of $L_k[f]$ are more complicated but can be found by an algebraic procedure, see e.g.;~\cite{AM, AiraultNeretin}.

\section{Segal-Wilson Grassmannian}

Sato's (universial) Grassmannian appeared first in 1982 in~\cite{Sato}  as an infinite dimensional generalization of the classical finite dimensional Grassmannian manifolds and they are described as `the topological closure of the inductive limit of' a finite dimensional Grassmanian as the dimensions of the ambient vector space and its subspaces tend to infinity. It turned out to be a very important infinite dimensional manifold being related to the representation theory of loop groups, integrable hierarchies, micrological analysis, conformal and quantum field theories, the second quantization of fermions, and to many other topics~\cite{DJKM83,MPPR,SegalWilson,Witten}.
In the Segal and Wilson
approach~\cite{SegalWilson} the infinite dimensional Grassmannian $\Gr(H)$ is taken over the separable Hilbert space $H$. The first systematic description of the  infinite dimensional Grassmannian can be found in~\cite{PrSe}.

We present here a general definition of the infinite dimensional smooth Grassmannian $\Gr_{\infty}(H)$.
As a separable Hilbert space  we take the space $L^2(S^1)$ and consider its dense subspace $H=C^{\infty}_{\|\cdot\|_2}(S^1)$  of smooth complex valued functions defined on the unit circle endowed with  $L^2(S^1)$ inner product $\langle f,g\rangle=\frac{1}{2\pi}\int\limits_{S^1}f\bar g\,dw$, $f,g\in H$. The orthonormal basis of $H$ is $\{z^k\}_{k\in\mathbb Z}=\{e^{ik \theta}\}_{k\in\mathbb Z}$, $e^{i\theta} \in S^1$. 
Let us split all integers $\mathbb Z$ into two sets $\mathbb Z^+=\{0,1,2,3,\dots\}$ and $\mathbb Z^-=\{\dots,-3,-2,-1\}$, and
let us define a polarization by  $$H_+=\spn_H\{z^k,\ k\in\mathbb Z^+\}, \qquad H_-=\spn_H\{z^k,\ k\in \mathbb Z^-\}.$$
Here and further span is taken in the appropriate space indicated as a subscription. The Grassmanian is thought of as the set of closed linear subspaces $W$ of $H$, which are commensurable with $H_+$ in the sense that they have finite codimension in both $H_+$ and~$W$. This can be defined by means of the descriptions of the orthogonal projections of the subspace $W\subset H$ to $H_+$ and $H_-$.

\begin{definition}
The infinite dimensional smooth Grassmannian $\Gr_{\infty}(H)$ over  the space $H$  is the set of  subspaces $W$ of $H$, such that
\begin{itemize}
\item[1.]{the orthogonal projection $pr_+\colon W\to H_+$ is a Fredholm operator,}
\item[2.]{the orthogonal projection $pr_-\colon W\to H_-$ is a compact operator.}
\end{itemize}
\end{definition}

The requirement that $pr_+$ is Fredholm means that the kernel and cokernel of $pr_+$ are finite dimensional. More information about Fredholm operators the reader can find in~\cite{Doug}. It was proved in~\cite{PrSe}, that $\Gr_{\infty}(H)$ is a dense submanifold in a Hilbert manifold modeled over the space $\mathcal L_2(H_+,H_-)$ of Hilbert-Schmidt operators from $H_+$ to $H_-$, that  itself has the structure of a Hilbert space, see~\cite{RS}.  Any $W\in\Gr_{\infty}(H)$ can be thought of as a graph $W_T$ of a Hilbert-Schmidt operator $T\colon W\to W^{\bot}$, and  points of a neighborhood $U_W$ of $W\in \Gr_{\infty}(H)$ are in one-to-one correspondence with operators from $\mathcal L_2(W,W^{\bot})$.

Let us denote by $\mathfrak S$ the set of all collections $\mathbb S\subset \mathbb Z$ of integers such that $\mathbb S\setminus \mathbb Z^+$ and $\mathbb Z^+\setminus \mathbb S$ are finite. Thus, any sequence $\mathbb S$ of integers is bounded from below and contains all positive numbers starting from some number. It is clear that the sets $H_{\mathbb S}=\spn_{H}\{z^k,\ k\in\mathbb S\}$ are elements of the Grassmanian $\Gr_{\infty}(H)$ and they are usually called {\it special points}. The collection of neighborhoods $\{U_{\mathbb S}\}_{\mathbb S\in \mathfrak S},$
$$
U_{\mathbb S}=\{W\ \mid\ \text{there is an orthogonal projection}\ \pi\colon  W\to H_{\mathbb S}\ \text{that is an isomorphism}\}
$$
forms an open cover of $\Gr_{\infty}(H)$. The virtual cardinality of $\mathbb S$ defines the {\it virtual dimension} (v.d.) of $H_{\mathbb S}$, namely:
\begin{equation}\label{virtdim}
\virtcard(\mathbb S)  = \virtdim(H_{\mathbb S})=\dim(\mathbb N\setminus\mathbb S)-\dim(\mathbb S\setminus\mathbb N)
 = \ind(pr_+).
\end{equation}
The expression $\ind(pr_+)=\dim\kernel(pr_+)-\dim\cokernel(pr_-)$ is called the index of the Fredholm operator~$pr_+$. According to their virtual dimensions the points of $\Gr_{\infty}(H)$ belong to different components of connectivity. The  Grassmannian is the disjoint union of connected components parametrized by their virtual dimensions.

\section{L\"owner-Kufarev evolution}

The pioneering idea of L\"owner \cite{Loewner} in 1923 contained two main ingredients: subordination chains and semigroups of conformal maps. This far-reaching program was created in the hopes to solve
the Bieberbach conjecture \cite{Bieb} and the final proof of this conjecture by de~Branges \cite{Branges} in 1984 was based on L\"owner's  parametric method.
The modern form of this method is due to Kufarev \cite{Kufarev} and Pommerenke \cite{Pommerenke1, Pommerenke2}. Omitting review over subordination chains we concentrate our attention on the other ingredient, i.e.;  on evolution families relating them to semigroups as in  \cite{Contreras, Goryainov, Pommerenke2}.

Let us consider a semigroup $\mathcal P$ of conformal univalent maps from the unit disk $\mathbb D$ into itself with superposition as a semigroup operation. This makes $\mathcal P$ a topological semigroup with respect of the topology of local uniform convergence on $\mathbb D$. We impose the natural normalization for such conformal maps $\Phi(z)=b_1z+b_2z^2+\dots$ about the origin, $b_1>0$. The unity of this semigroup is the identity map. A continuous homomorphism from $\mathbb R^+$ to $\mathcal P$ with a parameter $\tau\in \mathbb R^+$ gives a {\it semiflow} $\{\Phi^{\tau}\}_{\tau\in\mathbb R^+}\subset \mathcal P$ of conformal maps $\Phi^{\tau}:\mathbb D\to \Omega\subset \mathbb D$, satisfying the properties
\begin{itemize}
\item $\Phi^0=id$;
\item $\Phi^{\tau+s}=\Phi^{s}\circ \Phi^{\tau}$;
\item $\Phi^{\tau}(z)\to z$ locally uniformly in $\mathbb D$ as $\tau\to 0$.
\end{itemize}
In particular, $\Phi^{\tau}(z)=b_1(\tau)z+b_2(\tau)z^2+\dots$, and $b_1(0)=1$. This semi-flow is generated by a vector field $v(z)$ if for each $z\in \mathbb D$ the function $w=\Phi^{\tau}(z), \tau\geq 0$ is a solution to an autonomous differential equation $dw/d\tau=v(w)$, with the initial condition $w(z,\tau)\bigg|_{\tau=0}=z$. This vector field, called  infinitesimal generator, is given by $v(z)=-zp(z)$ where $p(z)$ is a regular Carath\'eodory function in the unit disk, with $\re p(z)>0$ in $\mathbb D$.

We call a subset $\Phi^{t,s}$ of $\mathcal P$, $0\leq s\leq t$ an {\it evolution family} if
\begin{itemize}
\item $\Phi^{t,t}=id$;
\item $\Phi^{t,s}=\Phi^{t,r}\circ \Phi^{r,s}$, for $0\leq s\leq r\leq t$;
\item $\Phi^{t,s}(z)\to z$ locally uniformly in $\mathbb D$ as $t-s\to 0$.
\end{itemize}
In particular, if $\Phi^{\tau}$ is a one-parameter semiflow, then $\Phi^{t-s}$ is an evolution family. Given an evolution family $\{\Phi^{t,s}\}_{t,s}$, every function $\Phi^{t,s}$ is univalent and there exists an essentially unique infinitesimal generator, called the Herglotz vector field $H(z,t)$, such that
\begin{equation}\label{LK1}
\frac{d\Phi^{t,s}(z)}{d t}=H(\Phi^{t,s}(z),t),
\end{equation}
where the function $H$ is given by $H(z,t)=-zp(z,t)$ with a Carath\'eodory function $p$ for almost all $t\geq 0$. The converse is also true. Solving equation \eqref{LK1} with the initial condition $\Phi^{s,s}=\id$, we obtain an evolution family. In particular, we can consider situation when $s=0$. Let
$f(z,t)=e^tw(z,t)$. A remarkable property
of evolution families is that any conformal embedding $f$ of the unit disk $\mathbb D$ to $\mathbb C$ normalized by $f(z)=z+c_1z^2+\dots$ in $\mathbb D$ can be obtained as a one-parameter homotopy
from the identity map, i.e.;
$$
f(z)=\lim\limits_{t\to\infty} f(z,t)=\lim\limits_{t\to\infty} e^{t}w(z,t),
$$
where the function
\[
w(z,t)=e^{-t}z\left(1+\sum\limits_{n=1}^{\infty}c_n(t)z^n\right),
\]
solves the Cauchy problem for the L\"owner-Kufarev ODE
\begin{equation}\label{LK2}
\frac{dw}{dt}=-wp(w,t),\quad w(z,t)\bigg|_{t=0}=z,
\end{equation}
and with the function
$p(z,t)=1+p_1(t)z+\dots$ which is holomorphic in $\mathbb D$ for almos all $t\in [0,\infty)$, measurable with respect to
$t\in [0,\infty)$ for any fixed $z\in\mathbb D$, and such that $\re p>0$ in $\mathbb D$, see  \cite{Pommerenke2}. The function
$w(z,t)=\Phi^{t,0}(z)$ is univalent and maps $\mathbb D$ into $\mathbb D$.   

\begin{lemma}\label{CarClass}
Let the function $w(z,t)$ be a solution to the Cauchy problem~\eqref{LK2}.
 If the driving function $p(\cdot,t)$, being from the Carath\'eodory class for almost all $t\geq 0$, is  $C^{\infty}$ smooth in the closure $\hat{\mathbb D}$ of the unit disk $\mathbb D$ and summable with respect to $t$, then the boundaries of
the domains $\Omega(t)=w(\mathbb D,t)\subset \mathbb D$ are smooth for all $t$ and $w(\cdot,t)$ extended to $S^1$ is injective on $S^1$.
\end{lemma}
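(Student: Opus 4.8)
The plan is to regard the L\"owner-Kufarev ODE~\eqref{LK2} as the flow of a time-dependent vector field on the \emph{closed} disk and to transfer the regularity of the driving term $p$ to the flow map. Write $v(w,t)=-wp(w,t)$. By hypothesis $v(\cdot,t)\in C^{\infty}(\hat{\mathbb{D}})$ for almost every $t$ and $v$ is summable in $t$, so $v$ is a Carath\'eodory vector field that is $C^{\infty}$ in the space variable. First I would extend $p(\cdot,t)$, by a Seeley-type extension applied uniformly in $t$ (which is linear and bounded, hence preserves measurability and the $L^1$ bound), to a field that is $C^{\infty}$ on a fixed open neighborhood $U\supset\hat{\mathbb{D}}$; this lets me work with an honest flow $\Phi^t$ on an open subset of $\mathbb{C}\cong\mathbb{R}^2$ rather than on a manifold with boundary. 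The next observation is that $\hat{\mathbb{D}}$ is forward invariant: on $S^1$ the outward radial component of $v$ is
\[
\re\bigl(\bar w\, v(w,t)\bigr)=-|w|^2\,\re p(w,t)=-\re p(w,t)\le 0,
\]
since the Carath\'eodory condition $\re p>0$ in $\mathbb{D}$ passes to $\re p\ge 0$ on $S^1$ by continuity. Thus $v$ points into $\hat{\mathbb{D}}$ or is tangent to $S^1$, and Nagumo's subtangentiality criterion guarantees that the forward flow keeps $\hat{\mathbb{D}}$ invariant, so $\Phi^t\colon\hat{\mathbb{D}}\to\hat{\mathbb{D}}$ is defined for all $t\ge 0$.

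I would then invoke the Carath\'eodory theory for ODEs whose right-hand side is smooth in the state variable and merely integrable in time: solutions exist, are absolutely continuous in $t$, are unique (because $v$ is locally Lipschitz in $w$, uniformly on the compact set $\hat{\mathbb{D}}$, with an $L^1$ Lipschitz constant), and depend $C^{\infty}$-smoothly on the initial datum. Smoothness in the initial datum shows that $\Phi^t=w(\cdot,t)$ is $C^{\infty}$ on $\hat{\mathbb{D}}$; in particular the holomorphic function $w(\cdot,t)$ on $\mathbb{D}$ extends $C^{\infty}$-smoothly to $S^1$, and that extension is exactly the boundary restriction $\Phi^t|_{S^1}$. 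Uniqueness, forward and backward (the latter via a Gronwall estimate), gives injectivity of $\Phi^t$ on all of $\hat{\mathbb{D}}$: if $w(z_1,t)=w(z_2,t)$ for some $t$, then the two trajectories coincide, whence $z_1=z_2$. Restricting to $S^1$ yields the asserted injectivity of $w(\cdot,t)$ on the circle.

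Finally, for the smoothness of $\partial\Omega(t)$ I would use Liouville's formula for the Jacobian of the flow. Writing $J(z,t)=\det d_z\Phi^t$ for the real $2\times 2$ Jacobian, one has
\[
J(z,t)=\exp\Bigl(\int_0^t (\operatorname{div} v)\bigl(\Phi^s(z),s\bigr)\,ds\Bigr)>0,
\]
the integral being finite since $\operatorname{div} v$ is smooth in $w$ on $\hat{\mathbb{D}}$ and summable in $t$. Hence $d_z\Phi^t$ is nonsingular throughout $\hat{\mathbb{D}}$, so $\Phi^t|_{S^1}$ is a smooth immersion; being an injective immersion of the compact circle it is a smooth embedding. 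Since $\Phi^t$ is holomorphic and injective on $\mathbb{D}$ it is an open map there with image $\Omega(t)$, while as an injective continuous map of the compact $\hat{\mathbb{D}}$ it is a homeomorphism onto its image $\overline{\Omega(t)}$; consequently $\Phi^t(S^1)=\overline{\Omega(t)}\setminus\Omega(t)=\partial\Omega(t)$. Therefore $\partial\Omega(t)$ is the smooth embedded Jordan curve $\Phi^t(S^1)$, which is the required conclusion. The principal obstacle I anticipate is not geometric but analytic: justifying the smooth dependence on initial conditions and the Liouville formula in the \emph{Carath\'eodory} regime, where the field is only $L^1$ in time, rather than in the classical setting of continuous time dependence. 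Once that regularity theory is in place, the invariance, injectivity, and immersion arguments are routine.
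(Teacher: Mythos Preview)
Your argument is correct, but the paper proceeds differently. Rather than extending the field, invoking Nagumo invariance, and appealing to the abstract smooth-dependence and Liouville theorems for Carath\'eodory flows, the paper bootstraps the regularity by hand: it first cites a classical result (Volpato) to get $C^1$ dependence on the initial point, hence $w(\cdot,t)\in C^1(S^1)$; then it differentiates the ODE in $z$ to obtain the closed formula
\[
\log w'(z,t)=-\int_0^t\bigl(p(w,\tau)+w\,p'(w,\tau)\bigr)\,d\tau,
\]
which is manifestly extendable to $S^1$ and shows $w'\in C^1(S^1)$ (and incidentally $w'\neq 0$ there); iterating with the analogous formula for $w''$ gives $w\in C^\infty(S^1)$. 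Injectivity on $S^1$ and smoothness of $\partial\Omega(t)$ are left implicit.

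Your route is more conceptual and also more complete: you make the injectivity and the non-degeneracy of the boundary map explicit via backward uniqueness and the Liouville determinant formula, whereas the paper relies on the reader to extract these from the non-vanishing of $w'$ on $S^1$. Note that your Liouville computation and the paper's formula for $\log w'$ are the same calculation in disguise (take $2\re$ of the latter). What the paper's approach buys is that it avoids the Seeley extension and the Nagumo step entirely, and it sidesteps the technical issue you flag at the end---smooth dependence and the Liouville formula under mere $L^1$-in-time regularity---by producing the derivatives explicitly as convergent integrals.
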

\begin{proof}
Observe that the continuous and differentiable dependence of
 the solution of a differential equation $\dot{x}=F(t,x)$ on the initial condition $x(0)=x_0$ is a classical problem. One can refer, e.g., to \cite{Volpato}
in order to assure that summability of $F(\cdot, x)$ regarding to $t$ for each fixed $x$ and  continuous differentiability ($C^1$ with respect to
$x$ for almost all $t$) imply that the solution $x(t,x_0)$ exists, is unique, and is $C^1$ with respect to $x_0$. In our case, the solution to  \eqref{LK2}
exists, is unique and analytic in $\mathbb D$, and moreover, $C^1$ on its boundary $S^1$. Let us differentiate \eqref{LK2} inside the unit disk $\mathbb D$ with respect to $z$ and write
\[
\log w' =-\int\limits_{0}^{t}(p(w(z,\tau),\tau)+w(z,\tau)p'(w(z,\tau),\tau))d\tau,
\]
choosing the branch of the logarithm such as $\log w'(0,t)=-t$.
This equality is extendable onto $S^1$ because the right-hand side is, and therefore, $w'\in C^1(S^1)$ and $w\in C^2(S^1)$. We continue analogously and write the formula
\[
w''=-w'\int\limits_{0}^{t}(2w'(z,\tau)p'(w(z,\tau),\tau)
+w(z,\tau)w'(z,\tau)p''(w(z,\tau),\tau))d\tau,
\]
which guarantees that $w\in C^3(S^1)$. Finally, we come to the conclusion that $w$ is $C^\infty$ on $S^1$.
\end{proof}

Let us denote by $f(z,\infty)$ the final point of the trajectory $f(z,t)=e^tw(z,t)$, $t\in [0,\infty)$, where  $w(z,t)$ is a solution to  the Cauchy problem~\eqref{LK2} with the driving function $p(z,t)$ satisfying the conditions of Lemma~\ref{CarClass}. Then $f(z,t)\in \mathcal F_0$ for all $t\in [0,\infty)$ but not necessarily for $t=\infty$. One can formulate a stronger reciprocal statement.

\begin{lemma}\label{lemmaX}
With the above notations let $f(z)\in \mathcal F_0$. Then there exists a function $p(\cdot,t)$ from the Carath\'eodory class for almost all $t\geq 0$, and $C^{\infty}$ smooth in  $\hat{\mathbb D}$, such that $f(z)=\lim_{t\to\infty}f(z,t)$ is the final point of the L\"owner-Kufarev trajectory with the driving term $p(z,t)$.
\end{lemma}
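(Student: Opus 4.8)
\emph{The plan} is to realise $f$ as the initial element of an \emph{increasing} L\"owner subordination chain and then to read off the driving function from the geometry of that chain. The conceptual point is the following identity. Suppose $\{g(\cdot,t)\}_{t\ge0}$ is a family of univalent maps $g(\cdot,t)\colon\mathbb D\to\mathbb C$ with $g(0,t)=0$, $g'(0,t)=e^{t}$ and increasing image domains $\Omega(t)=g(\mathbb D,t)$, governed by the L\"owner--Kufarev PDE $\partial_t g=z\,\partial_z g\,p(z,t)$ with a Carath\'eodory function $p$. If $w(z,t)$ solves the ODE \eqref{LK2} with the \emph{same} $p$, then $\tfrac{d}{dt}\,g(w(z,t),t)=0$, since the two chain-rule terms cancel by the ODE and the PDE; hence $g(w(z,t),t)\equiv g(z,0)$ and so $w(z,t)=g(\cdot,t)^{-1}\!\big(g(z,0)\big)$. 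Writing $g(\cdot,t)=e^{t}G_t$ with $G_t=e^{-t}g(\cdot,t)$ converging to a univalent map with unit derivative at the origin, a direct estimate gives $\lim_{t\to\infty}e^{t}w(z,t)=g(z,0)$. Thus it suffices to exhibit such a chain with $g(\cdot,0)=f$ and with a driving function $p$ that is $C^{\infty}$ on $\hat{\mathbb D}$.

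To build the chain I would grow the target domain outward. Because $f\in\mathcal F_0$, the curve $\Gamma=f(S^1)$ is a smooth Jordan curve with $0$ in its interior and $\Omega=f(\mathbb D)$ has conformal radius $1$. I would choose a family of nested smooth Jordan domains $\{\Omega(t)\}_{t\ge0}$ with $\Omega(0)=\Omega$, $\Omega(s)\Subset\Omega(t)$ for $s<t$, depending smoothly (jointly in the boundary parameter and in $t$) on $t$, exhausting $\mathbb C$, and equal to growing concentric disks for $t\ge T$. Such a family is obtained, e.g., by flowing $\Gamma$ along a smooth vector field that is outward-transverse to every $\Gamma_t=\partial\Omega(t)$ and becomes radial far from the origin, and then smoothly rounding the large curves to circles. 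Reparametrising $t$ so that the conformal radius of $\Omega(t)$ equals $e^{t}$ (a smooth, strictly increasing quantity running from $1$ to $\infty$), I let $g(\cdot,t)\colon\mathbb D\to\Omega(t)$ be the Riemann map normalised by $g(0,t)=0$, $g'(0,t)=e^{t}>0$; by uniqueness of the normalised Riemann map, $g(\cdot,0)=f$.

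The driving function is then recovered as $p(z,t)=\partial_t g(z,t)\big/\big(z\,\partial_z g(z,t)\big)$. Here I would invoke the Kellogg--Warschawski regularity theorem together with smooth dependence on parameters to conclude that $g$ is $C^{\infty}$ up to $S^1$, smooth in $t$, with $\partial_z g\neq0$ on $\hat{\mathbb D}$; since $\partial_t g$ vanishes to first order at $z=0$, the quotient is holomorphic in $\mathbb D$ and $C^{\infty}$ on $\hat{\mathbb D}$. The normalisation forces $p(0,t)=1$, and the monotone outward motion of the boundary gives $\re p>0$, the geometric content of the L\"owner--Kufarev equation, so that $p(\cdot,t)$ is Carath\'eodory; for $t\ge T$ the chain is a pure dilation and $p\equiv1$, which makes the $t$-dependence harmless. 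Combined with the identity of the first paragraph, this yields $f=\lim_{t\to\infty}e^{t}w(z,t)$, as required.

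\emph{The main obstacle} is the regularity bookkeeping rather than the bare existence: one must guarantee that the exhausting family $\{\Omega(t)\}$ can be chosen \emph{jointly} smooth up to the boundary and that the induced Riemann maps, and hence $p$, inherit $C^{\infty}$ smoothness on $\hat{\mathbb D}$, and one must verify $\re p>0$ from the strict outward growth of $\partial\Omega(t)$. The cleanest route to the last point is to express $\re p$ on $S^1$ as the normal velocity of $\Gamma_t$ divided by $|g'|$ and to recover $p$ inside $\hat{\mathbb D}$ as the Schwarz integral of this positive smooth boundary datum, which simultaneously delivers holomorphy in $\mathbb D$, smoothness up to $S^1$, and the Carath\'eodory property.
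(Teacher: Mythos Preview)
Your overall strategy coincides with the paper's: realise $f$ as the initial element of an increasing subordination chain $g(\cdot,t)$ with $g'(0,t)=e^t$, read off $p=\dot g/(zg')$, and recover $w(z,t)=g_t^{-1}(g_0(z))$ as the ODE solution with $e^tw(z,t)\to f(z)$. Where you and the paper part ways is in the \emph{choice} of the expanding chain. The paper takes $\partial\Omega(t)$ to be the level lines of the Green function of the complement $\Omega^-=\mathbb{C}\setminus\overline{f(\mathbb D)}$ with pole at $\infty$; equivalently, if $\phi$ is the exterior Riemann map of $\Omega^-$, then $\Omega(t)$ is the interior of $\phi(\{|w|=r(t)\})$. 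This construction is canonical and makes the regularity bookkeeping you flag essentially automatic: the conformal maps $F(\cdot,t)$ are built from $f$ and $\phi$, both $C^\infty$ on $\hat{\mathbb D}$ by Kellogg--Warschawski, and for $t>0$ the level curves are even real-analytic, so the joint smoothness of $p$ on $\hat{\mathbb D}\times[0,\infty)$ follows without a separate parametric-dependence theorem for Riemann maps. Your ad hoc flow, by contrast, requires exactly the ``smooth dependence on parameters'' input you cite, which is true but is a genuine extra ingredient. On the other hand your construction buys something the paper's does not: since your domains are exact round disks for $t\ge T$, one has $g_t(\zeta)=e^t\zeta$ there and hence $e^tw(z,t)=f(z)$ identically for $t\ge T$, so the limit is attained in finite time and no asymptotic argument (or citation to \cite{Pommerenke2, ProkhVas}) is needed.
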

\begin{proof}
Indeed, the domain $\Omega^+=f(\mathbb D)$ has a complement $\Omega^-$ which is a simply connected domain with infinity $\infty$ as an internal
point of $\Omega^-$ and $\partial \Omega^+=\partial \Omega^-$. Let us construct a subordination chain $\Omega^+(t)$ such that $\partial \Omega^+(t)$ is a level line of the Green function of the domain $\Omega^-$ with a singularity at $\infty$, and such that the conformal radius of $\Omega^+(t)$ with respect to the origin is equal to $e^t$. This can be always achieved, see \cite{Pommerenke2}. Then we can construct a one-parameter subordination chain of univalent maps $F(z,t)=e^t(z+\dots)$, $F(\cdot, t):\,\,\mathbb D\to \Omega^+(t)$ that exists for the time interval $[0,\infty)$, $f(z)=F_0(z)=F(z,0)$ and $f(\mathbb D)=\Omega^+=\Omega^+(0)$, and such that $\Omega^+(\infty)=\mathbb C$. Set up the function $p(z,t)=\dot{F}/zF'$, where
$\dot{F}$ and $F'$ are the real $t$-derivative and the complex $z$-derivative respectively. It is obviously smooth on the boundary and belongs to the Carath\'eodory class. The function $w(z,t)=F(F^{-1}(z,t),0)$ is defined in the whole unit disk (as an analytic continuation from $F^{-1}(F_0(z),t)\subset \mathbb D$), satisfies the L\"owner-Kufarev equation~\eqref{LK2}, and $f(z,t)=e^tw(z,t)$ has the limit $f(z)=f(z,\infty)$.
The latter statement can be found in \cite{Pommerenke2, ProkhVas}.
\end{proof}

\section{Hamiltonian formalism}

Let the driving term $p(z,t)$ in the L\"owner-Kufarev ODE~\eqref{LK2} be from the Carath\'eodory class for almost all $t\geq 0$,   $C^{\infty}$-smooth in  $\hat{\mathbb D}$, and summable with respect to $t$ as in Lemma~\ref{CarClass}.
 Then the domains $\Omega(t)=f(\mathbb D,t)=e^t w(\mathbb D,t)$  have  smooth boundaries $\partial
 \Omega(t)$ and the function $f$ is injective on $S^1$, i.e.; $f\in \mathcal F_0$. So the L\"owner-Kufarev equation can be extended to the
 closed unit disk $\hat{\mathbb D}=\mathbb D\cup S^1$.

Let us consider  functions ${\psi}\in H=C^{\infty}_{\|\cdot\|_2}$ from $T_{f}^*\mathcal F_0\otimes\mathbb C$, $f\in\mathcal F_0$,
\[
\psi(z)=\sum\limits_{k\in\mathbb Z}\psi_kz^{k-1}, \quad |z|=1,
\]
 and the space of observables on  $T^*\mathcal F_0\otimes\mathbb C$, given by integral functionals 
\[
\mathcal R(f,\bar\psi, t)=\frac{1}{2\pi}\int_{z\in S^1}r(f(z),\bar\psi(z), t)\frac{dz}{iz},
\]
where the function $r(\xi,\eta,t)$ is smooth in variables $\xi,\eta$ and measurable in $t$.

We define a special observable, the time-dependent pseudo-Hamiltonian $\mathcal H$,  by
 \begin{equation}\label{Ham3}
 \mathcal H(f,\bar\psi,p,t)=\frac{1}{2\pi}\int_{z\in S^1}\bar{z}^2f(z,t)(1-p(e^{-t}f(z,t),t))\bar \psi(z,t)\frac{dz}{iz},
 \end{equation}
 with the driving function (control) $p(z,t)$ satisfying the above properties.
The Poisson structure on the space of observables is given by the canonical brackets
\[
\{\mathcal R_1, \mathcal R_2\}=2\pi \int_{z\in S^1}z^2\left(\frac{\delta \mathcal R_1}{\delta f}\frac{\delta \mathcal R_2}{\delta \bar \psi}-\frac{\delta \mathcal R_1}{\delta \bar \psi}\frac{\delta \mathcal R_2}{\delta f}\right)\frac{dz}{iz},
\]
where $\frac{\delta}{\delta f}$ and $\frac{\delta}{\delta \overline{\psi}}$ are the variational derivatives, $\frac{\delta}{\delta f}\mathcal R=\frac{1}{2\pi}\frac{\partial}{\partial f}r$, $\frac{\delta}{\delta \overline{\psi}}\mathcal R=\frac{1}{2\pi}\frac{\partial}{\partial \overline{\psi}}r$.

Representing the coefficients $c_n$ and $\bar\psi_m$ of $f$ and $\bar{\psi}$ as integral functionals 
\[
c_n=\frac{1}{2\pi}\int_{z\in S^1}\bar{z}^{n+1}f(z,t)\frac{dz}{iz},\quad \bar\psi_m= \frac{1}{2\pi}\int_{z\in S^1}{z}^{m-1}\bar\psi(z,t)\frac{dz}{iz}, 
\]
$n\in \mathbb N$, $m\in\mathbb Z$, we obtain
$\{c_n, \bar\psi_m\}=\delta_{n,m}$, $\{c_n, c_k\}=0$, and $\{\bar\psi_l, \bar\psi_m\}=0$, where $n, k\in \mathbb N$, $l, m\in\mathbb Z$.

The infinite-dimensional Hamiltonian system is written as
\begin{equation}\label{sys1}
\frac{d c_k}{dt}=\{c_k, \mathcal H\},
\end{equation}
\begin{equation}\label{sys2}
\frac{d\bar \psi_k}{dt}=\{\bar{\psi_k}, \mathcal H\},
\end{equation}
where $k\in \mathbb Z$ and $c_0=c_{-1}=c_{-2}=\dots=0$, or equivalently, multiplying by corresponding  powers of $z$ and summing up,
\begin{equation}\label{sys10}
\frac{d f(z,t)}{dt}=f(1-p(e^{-t}f,t))=2\pi \frac{\delta  \mathcal H}{\delta \overline{\psi}}z^2=\{f, \mathcal H\},
\end{equation}
\begin{equation}\label{sys20}
\frac{d\bar \psi}{dt}=-(1-p(e^{-t}f,t)-e^{-t}fp'(e^{-t}f,t))\bar
\psi=-2\pi \frac{\delta \mathcal H}{\delta f}z^2=\{\bar\psi, \mathcal H\},
\end{equation}
where $z\in S^1$. So the phase coordinates $(f,\bar{\psi})$ play the role of the canonical Hamiltonian pair. Observe that the equation~\eqref{sys10} is the L\"owner-Kufarev equation~\eqref{LK2} for the function $f=e^{t}w$.

Let us set up the {\it generating function} $\mathcal{G}(z)=\sum_{k\in\mathbb Z}\mathcal{G}_kz^{k-1}$, such that  $$\bar{\mathcal G}(z):=f'(z,t) \bar{\psi}(z,t).$$ Consider the `non-positive' $(\bar{\mathcal G}(z))_{\leq 0}$ and `positive' $(\bar{\mathcal G}(z))_{> 0}$ parts of the Laurent series for $\bar{\mathcal G}(z)$:
\begin{equation*}
(\bar{\mathcal G}(z))_{\leq 0} = (\bar{\psi}_1+2 c_1\bar{\psi}_2+3 c_2\bar{\psi}_3+\dots)+(\bar{\psi}_2+2 c_1\bar{\psi}_3+\dots)z^{-1}+\dots
= \sum\limits_{k=0}^{\infty}\bar{\mathcal G}_{k+1}z^{-k}.
\end{equation*}
\begin{equation*}
(\bar{\mathcal G}(z))_{> 0} = (\bar{\psi}_0+2  c_1\bar{\psi}_1+3 c_2\bar{\psi}_2+\dots)z+(\bar{\psi}_{-1}+2c_1\bar{\psi}_0+3c_2\bar{\psi}_1\dots)z^{2}+\dots
= \sum\limits_{k=1}^{\infty}\bar{\mathcal G}_{-k+1}z^{k}.
\end{equation*}

\begin{proposition}\label{timeindep}
Let the driving term $p(z,t)$ in the L\"owner-Kufarev ODE be from the Carath\'eodory class for almost all $t\geq 0$,   $C^{\infty}$-smooth in $\hat{\mathbb D}$, and summable with respect to~$t$.
The functions $\mathcal G(z)$,  $(\mathcal G(z))_{< 0}$, $(\mathcal G(z))_{\geq 0}$, and all coefficients $\mathcal G_n$ are time-independent for all $z\in S^1$.
\end{proposition}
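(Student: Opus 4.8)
The plan is to prove the stronger statement that the generating function itself is a conserved quantity of the Hamiltonian flow, and then read off the claims about its parts and coefficients. Since $\mathcal{G}$ and $\bar{\mathcal G}$ are complex conjugates on $S^1$ and $\bar{\mathcal G}(z)=f'(z,t)\bar\psi(z,t)$, it suffices to show that the total time derivative of $f'\bar\psi$ along the flow vanishes identically for $z\in S^1$. First I would invoke Lemma~\ref{CarClass}: under the stated hypotheses $f(\cdot,t)$ is $C^\infty$ up to $S^1$, so all the $z$- and $t$-derivatives below exist, the mixed partials commute, and the Laurent series may be manipulated term by term.

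The core computation is to differentiate the product, $\partial_t(f'\bar\psi)=(\partial_t f')\,\bar\psi+f'\,(\partial_t\bar\psi)$. For the first factor I interchange $\partial_t$ and $\partial_z$ and insert the L\"owner-Kufarev evolution~\eqref{sys10}, writing $\partial_t f'=\partial_z\big(f(1-p(e^{-t}f,t))\big)$. Expanding by the product and chain rules --- noting that the only $z$-dependence of $p(e^{-t}f,t)$ enters through $f$, so that $\partial_z p(e^{-t}f,t)=e^{-t}f'\,p'(e^{-t}f,t)$ --- yields $\partial_t f'=f'(1-p(e^{-t}f,t))-e^{-t}ff'\,p'(e^{-t}f,t)$. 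For the second factor I substitute~\eqref{sys20} directly, $\partial_t\bar\psi=-\big(1-p(e^{-t}f,t)-e^{-t}fp'(e^{-t}f,t)\big)\bar\psi$.

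Adding the two contributions, I expect a clean cancellation by design: the terms proportional to $f'\bar\psi(1-p(e^{-t}f,t))$ cancel, and the two terms proportional to $e^{-t}ff'\bar\psi\,p'(e^{-t}f,t)$ cancel as well, leaving $\partial_t(f'\bar\psi)=0$. This identifies $\bar{\mathcal G}=f'\bar\psi$ as the conserved current of the canonical pair $(f,\bar\psi)$, reflecting that $\bar\psi$ is dual to the vector field $f'$ under the bracket. Once $\bar{\mathcal G}$ is $t$-independent for each fixed $z\in S^1$, its complex conjugate $\mathcal{G}$ is too, and so is every Laurent coefficient $\mathcal G_n$; hence the positive part $(\mathcal G)_{\geq 0}$ and the negative part $(\mathcal G)_{<0}$, being partial sums of those coefficients, are conserved as well.

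The calculation is short and contains no deep obstacle; the only real care needed is, first, organizing the chain-rule expansion of $\partial_z p(e^{-t}f,t)$ and tracking the explicit $e^{-t}$ factor so that the cancellation is exact, and second, justifying the interchange of $\partial_t$ with $\partial_z$ and with the term-by-term splitting of the Laurent series, both of which rest on the $C^\infty$ boundary regularity supplied by Lemma~\ref{CarClass}.
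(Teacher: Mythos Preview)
Your proposal is correct and follows essentially the same route as the paper: the paper's proof is the single sentence ``check $\dot{\bar{\mathcal G}}=\{\bar{\mathcal G},\mathcal H\}=0$'', and since $\bar{\mathcal G}=f'\bar\psi$ has no explicit $t$-dependence, computing $\partial_t(f'\bar\psi)$ via the Hamilton equations~\eqref{sys10}--\eqref{sys20} is exactly that Poisson-bracket verification, which you have carried out in full.
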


\begin{proof}
It is sufficient to check the equality $\dot{\bar{\mathcal G}}=\{\bar{\mathcal G}, {\mathcal H}\}=0$ for the function $\mathcal G$, and then, the same holds for the coefficients of the Laurent series for $\mathcal G$. 
\end{proof}

\begin{proposition}
The conjugates  $\bar{\mathcal G}_k$, $k=1,2,\ldots$, to the coefficients of the generating function satisfy the Witt commutation relation $\{\bar{\mathcal G}_m,\bar{\mathcal G}_n\}=(n-m)\bar{\mathcal G}_{n+m}$ for $n,m\geq 1$, with respect to our Poisson structure.
\end{proposition}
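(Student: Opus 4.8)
The plan is to reduce the statement to an elementary combinatorial identity among the coefficients, using only the canonical brackets $\{c_a,\bar\psi_b\}=\delta_{a,b}$, $\{c_a,c_b\}=0$, $\{\bar\psi_a,\bar\psi_b\}=0$ established above. First I would extract an explicit bilinear formula for the coefficients from the definition $\bar{\mathcal G}(z)=f'(z,t)\bar\psi(z,t)$ and the displayed Laurent expansions. Writing $f(z)=z\sum_{a=0}^{\infty}c_az^{a}$ with $c_0=1$, so that $f'(z)=\sum_{a=0}^{\infty}(a+1)c_az^{a}$, and $\bar\psi(z)=\sum_{m}\bar\psi_m z^{1-m}$, one collects powers of $z$ in the product and obtains
\[
\bar{\mathcal G}_j=\sum_{a=0}^{\infty}(a+1)c_a\bar\psi_{a+j},\qquad j\in\mathbb Z,
\]
which for $j=1$ reproduces $\bar\psi_1+2c_1\bar\psi_2+3c_2\bar\psi_3+\dots$, matching the given $(\bar{\mathcal G})_{\le 0}$.

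Next, since each $\bar{\mathcal G}_m$ is bilinear in the canonical variables $\{c_a\}_{a\ge 1}$ and $\{\bar\psi_b\}_{b\in\mathbb Z}$ (with the constant $c_0=1$ and the $\bar\psi_b$, $b\le 0$, being Poisson-central), the bracket $\{\bar{\mathcal G}_m,\bar{\mathcal G}_n\}$ is determined by bilinearity and the Leibniz rule from the fundamental brackets alone. Expanding
\[
\{\bar{\mathcal G}_m,\bar{\mathcal G}_n\}=\sum_{a,b\ge 0}(a+1)(b+1)\{c_a\bar\psi_{a+m},\,c_b\bar\psi_{b+n}\},
\]
the Leibniz rule leaves only the two cross terms involving $\{c_a,\bar\psi_{b+n}\}$ and $\{\bar\psi_{a+m},c_b\}$, which fire the Kronecker deltas $\delta_{a,b+n}$ and $\delta_{a+m,b}$ respectively. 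Carrying out the two index shifts $b=a+m$ and $a=b+n$ collapses both double sums onto the single index $a$, giving
\[
\{\bar{\mathcal G}_m,\bar{\mathcal G}_n\}=\sum_{a=0}^{\infty}(a+1)c_a\bar\psi_{a+m+n}\big[(a+n+1)-(a+m+1)\big]=(n-m)\sum_{a=0}^{\infty}(a+1)c_a\bar\psi_{a+m+n},
\]
and the right-hand sum is precisely $\bar{\mathcal G}_{m+n}$, which proves $\{\bar{\mathcal G}_m,\bar{\mathcal G}_n\}=(n-m)\bar{\mathcal G}_{n+m}$.

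The computation is essentially routine, so the only real care is in the bookkeeping. The weights $(a+1)$ coming from $f'$ must be tracked through both index shifts, and it is their difference $(a+n+1)-(a+m+1)=n-m$ that produces the structure constant; this is exactly where the antisymmetric Witt coefficient is born. I would double-check the sign convention of the canonical bracket, namely that $\{c_a,\bar\psi_b\}=\delta_{a,b}$ reproduces $\{A,B\}=\sum_a(\partial_{c_a}A\,\partial_{\bar\psi_a}B-\partial_{\bar\psi_a}A\,\partial_{c_a}B)$, in order to be sure the answer is $(n-m)$ and not $(m-n)$. The hypothesis $m,n\ge 1$ is precisely what keeps all the $c$-subscripts nonnegative after the shifts and keeps the central $\bar\psi_b$, $b\le 0$, out of the surviving terms, so that the collapsed sum is genuinely $\bar{\mathcal G}_{m+n}$; for nonpositive indices the analogous manipulation would pick up boundary contributions and is more delicate. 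An equivalent but heavier route would compute the generating-function bracket $\{\bar{\mathcal G}(z),\bar{\mathcal G}(\zeta)\}$ directly from $\{f(z),\bar\psi(\zeta)\}=z^2\zeta/(\zeta-z)$ and then extract coefficients, but the coefficient-wise Leibniz computation above is cleaner and makes the origin of the factor $(n-m)$ transparent.
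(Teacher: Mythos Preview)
Your proof is correct and is precisely the straightforward coefficient computation the paper alludes to (the paper itself simply states ``The {\it proof} is straightforward'' and gives no details). Your derivation of the formula $\bar{\mathcal G}_j=\sum_{a\ge 0}(a+1)c_a\bar\psi_{a+j}$ matches the displayed expansions of $(\bar{\mathcal G})_{\le 0}$ and $(\bar{\mathcal G})_{>0}$, and the Leibniz/delta bookkeeping is handled correctly, including the observation that $m,n\ge 1$ guarantees the shifted indices land in the range $b\ge 1$ (resp.\ $a\ge 1$) where $\{c_b,\bar\psi_\cdot\}$ is nonzero.
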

\noindent
The {\it proof} is straightforward.

The isomorphism $\iota: \, \bar\psi_k\to \partial_k=\frac{\partial}{\partial c_k}$, $k>0$, is a Lie algebra isomorphism $(T^{{*}^{(0,1)}}_f\mathcal F_0, \{,\})\to (T^{(1,0)}_f\mathcal F_0, [,])$. It makes a correspondence between the conjugates $\bar{\mathcal G}_n$ of the
coefficients $\mathcal G_n$ of $(\mathcal G(z))_{\geq 0}$ at the point $(f,\bar{\psi})$ and the Kirillov vectors  $L_n[f]=\partial_n+\sum\limits_{k=1}^{\infty}(k+1)c_{k}\partial_{n+k}
$, $n\in\mathbb N$. Both satisfy the Witt commutation relations~\eqref{WittBr}.

\section{Curves in Grassmannian}

Let us recall, that the underlying  space for the universal smooth Grassmannian $\Gr_{\infty}(H)$ is $H= C^{\infty}_{\|\cdot\|_2}(S^1)$ with the canonical $L^2$ inner product of functions defined on the unit circle. Its natural polarization  
  \begin{eqnarray*}
H_+  = \spn_{H}\{1, z, z^2, z^3, \dots \}, \qquad
H_-= \spn_{H}\{z^{-1}, z^{-2}, \dots \},
 \end{eqnarray*}
was introduced before.
The pseudo-Hamiltonian $\mathcal H(f,\bar\psi,t)$ is defined for an arbitrary $\psi\in L^{2}(S^1)$, but we consider only smooth solutions of the Hamiltonian system, therefore, $\psi \in H$. We identify this  space with the dense subspace of $T^*_f\mathcal F_0\otimes\mathbb C$, $f\in\mathcal F_0$. The generating function $\mathcal G$ defines a linear map $\bar{\mathcal G}$ from the dense subspace of $T^*_f\mathcal F_0\otimes \mathbb C$ to $H$, which being written  in a matrix form becomes
\begin{equation}\label{matrix}
\renewcommand{\arraystretch}{1.4}
  \left(\begin{array}{c}
  \cdots \\
 \bar{\mathcal G}_{-2}\\
  \bar{\mathcal G}_{-1}\\
  \bar{\mathcal G}_0 \\
  \hline
  \bar{\mathcal G}_1\\
  \bar{\mathcal G}_{2}\\
   \bar{\mathcal G}_{3}\\
  \cdots\\
 \end{array}\right)
=
  \left(\begin{array}{ccccc|ccccc}
\ddots&\ddots&\ddots&\ddots&\ddots &\ddots&\ddots & \ddots & \ddots & \ddots   \\
 \cdots & 0& {\bf 1} & 2c_1 & 3c_2 & 4c_3 & 5c_4 & 6c_5 & 7c_6 & \cdots \\
  \cdots &0 & 0 & {\bf 1} &2c_1 & 3c_2 & 4c_3 & 5c_4 & 6c_5&  \cdots\\
    \cdots &0 & 0 & 0 &{\bf 1} & 2c_1 & 3c_2 & 4c_3 &  5c_4& \cdots \\
 \hline
  \cdots & 0 & 0 & 0 & 0 & {\bf 1} & 2c_1 & 3c_2 & 4c_3 &  \cdots \\
\cdots &0 & 0 & 0 & 0 & 0 & {\bf 1} & 2c_1 & 3c_2 &   \cdots \\
 \cdots & 0 &0& 0 & 0 & 0 & 0 & {\bf 1} & 2c_1 &   \cdots \\
  \ddots & \ddots &\ddots &\ddots &\ddots&\ddots & \ddots & \ddots & \ddots & \ddots  \\
 \end{array}\right)
  \left(\begin{array}{c}
  \cdots \\
  \bar\psi_{-2}\\
  \bar\psi_{-1}\\
  \bar\psi_{0} \\
  \hline
  \bar\psi_1\\
 \bar\psi_{2}\\
  \bar\psi_{3}\\
  \cdots\\
 \end{array}\right)
\end{equation}
or in the matrix block form as
\begin{equation}\label{bloc}
\renewcommand{\arraystretch}{1.4}
\left(
\begin{array}{c}
\bar{\mathcal G}_{> 0}\\
\bar{\mathcal G}_{\leq 0}
\end{array} \right)=\left(
\begin{array}{cc}
C_{1,1} & C_{1,2} \\
0 & C_{1,1}
\end{array} \right) \left(
\begin{array}{c}
\bar\psi_{> 0}\\
\bar\psi_{\leq 0}
\end{array} \right),
\end{equation}

The proof of the following proposition is obvious.
\begin{proposition}  The operator $C_{1,1}\colon H_+\to H_+$ is invertible.
\end{proposition}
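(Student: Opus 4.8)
The plan is to recognise $C_{1,1}$ as the compression to $H_+$ of the operator of multiplication by $f'$ on $H=C^\infty_{\|\cdot\|_2}(S^1)$, and then to invert it simply by multiplication by $1/f'$. Reading off \eqref{matrix}, the block $C_{1,1}$ sends a coefficient sequence $(\bar\psi_n)$ to $(\bar{\mathcal G}_m)$ with $\bar{\mathcal G}_m=\sum_{j\ge0}(j+1)c_j\,\bar\psi_{m+j}$ and $c_0=1$, which on the level of functions on $S^1$ is exactly the relation $\bar{\mathcal G}(z)=f'(z)\bar\psi(z)$, i.e.\ multiplication by the analytic symbol $f'(z)=\sum_{n\ge0}(n+1)c_nz^n$. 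The first step is to record the two structural facts behind this: since $f'\in\Hol(\mathbb D)$ has only non-negative Fourier modes, multiplication by $f'$ leaves the analytic polarization subspace $H_+$ invariant — this invariance is precisely the vanishing of the off-diagonal block in \eqref{bloc} — and $C_{1,1}$ is its restriction to $H_+$.

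Next I would exhibit the inverse explicitly. Because $f\in\mathcal F_0$ is a smooth embedding of $\hat{\mathbb D}$ — univalent in $\mathbb D$, and injective and smooth on $S^1$ — its derivative is non-vanishing on all of $\hat{\mathbb D}$: in $\mathbb D$ this is the standard fact that a univalent map has non-vanishing derivative, and on $S^1$ it follows from $f$ being a diffeomorphism of $\hat{\mathbb D}$ onto a smooth Jordan domain (Kellogg--Warschawski). Hence $1/f'\in\Hol(\mathbb D)\cap C^\infty(\hat{\mathbb D})$ again carries only non-negative modes, its coefficients $d_n$ being fixed recursively by $d_0=1$ and $d_n=-\sum_{k=1}^n(k+1)c_k\,d_{n-k}$ from the identity $f'\cdot(1/f')\equiv1$. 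Multiplication by $1/f'$ is then a continuous operator on $H$ that again preserves $H_+$, and as multiplication operators $M_{f'}M_{1/f'}=M_{1/f'}M_{f'}=\id$ on $H$. Restricting to the invariant subspace $H_+$ gives that $C_{1,1}=M_{f'}|_{H_+}$ is invertible with inverse $M_{1/f'}|_{H_+}$, which is the claim.

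The only delicate point — and the step I expect to be the real obstacle — is the non-vanishing of $f'$ up to and including the boundary $S^1$, for this is what guarantees that $1/f'$ is smooth and hence that $M_{1/f'}$ is continuous in the Fr\'echet topology of $H$; without boundary smoothness one would obtain only a formally defined inverse power series with no control of its action on $H$. This is exactly where the hypotheses on $\mathcal F_0$ enter — injectivity and $C^\infty$-smoothness on $S^1$, cf.\ the defining properties of $\mathcal F$ and Lemma~\ref{CarClass} — so I would invoke them precisely here. Everything else is the elementary remark that an $H_+$-analytic multiplier with $H_+$-analytic inverse is block-triangular with boundedly invertible diagonal block, which is why the authors can call the proof obvious.
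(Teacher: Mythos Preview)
Your argument is correct, and it is essentially the only natural way to flesh out what the paper leaves implicit: the authors simply declare the proof ``obvious'' and give no details. Your identification of the full bi-infinite matrix~\eqref{matrix} with the multiplication operator $M_{f'}$ is exactly right, and the invertibility of the diagonal block then reduces to the non-vanishing of $f'$ on $\hat{\mathbb D}$, which you correctly trace back to the defining smoothness and injectivity hypotheses on $\mathcal F_0$.

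One small point of bookkeeping: with the indexing conventions of the paper (where $\bar\psi_m$ is the coefficient of $z^{1-m}$ in $\bar\psi(z)$), the block decomposition in~\eqref{bloc} is along $\text{span}\{z,z^2,\dots\}\oplus\text{span}\{1,z^{-1},z^{-2},\dots\}$ rather than literally $H_+\oplus H_-$. The invariant subspace for $M_{f'}$ is the first summand, and the upper-left $C_{1,1}$ is $M_{f'}$ restricted there; the lower-right $C_{1,1}$ is the corresponding co-analytic Toeplitz compression, i.e.\ the transpose. Either way your conclusion stands --- a triangular matrix with unit diagonal whose entries are the Taylor coefficients of a smooth non-vanishing symbol has an inverse of the same type, given by the Taylor coefficients of the reciprocal symbol --- so this does not affect the validity of the argument, only the precise phrasing of which block is literally ``restriction to $H_+$''.
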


The generating function also defines a map ${\mathcal G}\colon T^{*}\mathcal F_0\otimes\mathbb C\to H$ by $$T^{*}\mathcal F_0\otimes\mathbb C\ni (f(z),\psi(z))\mapsto {\mathcal G}= \bar f'(z)\psi(z)\in H.$$ Observe that any solution $\big(f(z,t),\bar\psi(z,t)\big)$ of the Hamiltonian system is mapped into a single point of the space $H$, since all $\mathcal G_k$, $k\in\mathbb Z$ are time-independent  by Proposition~\ref{timeindep}.

Consider a bundle $\pi \colon\mathcal B\to T^{*}\mathcal F_0\otimes \mathbb C$ with a typical fiber isomorphic to $\Gr_{\infty}(H)$. We are aimed at construction of  a curve $\Gamma\colon [0,T]\to \mathcal B$ that is traced by the solutions to the Hamiltonian system, or in other words, by the L\"owner-Kufarev evolution. The curve $\Gamma$ will have the form $$\Gamma(t)=\Big(f(z,t),\psi(z,t),W_{T_n}(t)\Big)$$ in the local trivialization. Here $W_{T_n}$ is the graph of a finite rank operator $T_n\colon H_+\to H_-$, such that $W_{T_n}$ belongs to the connected component of $U_{H_+}$ of virtual dimension $0$.
In other words, we build an hierarchy of finite rank operators $T_{n}\colon H_+\to H_-$,  $n\in \mathbb Z^+$, whose graphs in the neighborhood $U_{H_+}$ of the point $H_+\in \Gr_{\infty}(H)$ are
 \[
T_{n}((\mathcal G(z))_{> 0})=  T_{n}({\mathcal G}_1,{\mathcal G}_2,\dots,{\mathcal G}_k,\dots )=\left\{
 \renewcommand{\arraystretch}{1.4}
 \begin{array}{l}
 { {G}_0}({\mathcal G}_1,{\mathcal G}_2,\dots,{\mathcal G}_k,\dots)\\
  { {G}_{-1}}({\mathcal G}_1,{\mathcal G}_2,\dots,{\mathcal G}_k,\dots)\\
 \dots \\
 { {G}_{-n+1}}({\mathcal G}_1,{\mathcal G}_2,\dots,{\mathcal G}_k,\dots),
 \end{array}
 \right.
 \]
 with ${G}_0z^{-1}+{G}_{-1}z^{-2}+ \ldots +{G}_{-n+1}z^{-n}\in H_-$. Let us denote by ${G}_k={\mathcal G}_k$, $k\in \mathbb N$. The elements ${G}_0, {G}_{-1},{G}_{-2}, \dots$ are constructed so that all $\{\bar{G}_k\}_{k=-n+1}^{\infty}$ satisfy the truncated Witt commutation relations 
 \[
 \{\bar{G}_k,\bar{G}_l\}_n=
 \begin{cases}
 (l-k)\bar{G}_{k+l},\quad &\text{for $k+l\geq -n+1$},\\
 0,\quad &\text{otherwise},
 \end{cases}
 \]
and are related to the Kirilov's vector fields under the isomorphism $\iota$. The projective limit as $n\leftarrow \infty$ recovers the whole Witt algebra and the Witt commutation relations.
 We present an explicit algorithm consisting of two steps in order to define the coefficients ${G}_{-k}$, $k=0,1,2,\ldots,n-1$.

\medskip

\noindent {\sc Step 1.} In the first step we remove the dependence of $\bar{\mathcal G}_{> 0}=\big\{\bar{\mathcal G}_{-k}\big\}_{k=0}^{\infty}$ on $\bar\psi_{> 0}=\big\{\bar\psi_{-k}\big\}_{k=0}^{\infty}$ defining
\begin{equation}\label{gtilde}
\widetilde{\mathcal G}_{> 0}=\bar{\mathcal G}_{> 0}-C_{1,1}\bar\psi_{> 0},
\end{equation}
where $C_{1,1}$ is the  upper triangular block in the matrix~\eqref{bloc}.
Thus, $\widetilde{\mathcal G}_{> 0}=\widetilde{\mathcal G}_{> 0}(\bar\psi_{\leq 0})$. Since the matrix $C_{1,1}$ is invertible we can write $\bar\psi_{\leq 0}=C_{1,1}^{-1}\bar{\mathcal G}_{\leq 0}$, that implies
$$
\widetilde{\mathcal G}_{> 0}=\widetilde{\mathcal G}_{> 0}(C_{1,1}^{-1}\bar{\mathcal G}_{\leq 0})=\widetilde{\mathcal G}_{> 0}(\bar{\mathcal G}_{\leq 0}).
$$
Let us denote by $\widetilde{T}_n$ the operator that maps a vector $\sum_{k=0}^{\infty}{\mathcal G}_{k+1}z^k$ from $H_+$ to a finite dimensional vector $\sum_{k=1}^{n}\widetilde{\mathcal G}_{-k+1}z^k\in H_-$. These operators can be written as the superpositions $\widetilde T_n= C^{(n)}_{1,2}\circ C_{1,1}^{-1}\colon H_+\to H_-$, where $C^{(n)}_{1,2}$ is equal to the $n$-th cut of the block $C_{1,2}$ in~\eqref{bloc} of the first lower $n$-rows and with vanishing others. The operators $\widetilde{T}_n\colon H_+\to H_-$ are of finite rank, and therefore, compact. Their graphs ${W}_{\widetilde{T}_n}=(\id+\widetilde T_n)(H_+)\in\Gr_{\infty}(H)$ belong to the connected component of virtual dimension $0$.

\medskip

\noindent {\sc Step 2.} Observe that up to now there is no clear relation of operators $\widetilde{T}_n$, or their graphs with the Kirillov vector fields $L_k$ and $L_{-k}$. However, it is not hard to see, that  the quantities $\bar{\mathcal G}_k$, considered as functions of $\bar{\psi}$ are mapped to~$L_k[f]$ under the isomorphism $\iota$ for $k>0$. In Step 2  we are aimed to modifying $\widetilde{\mathcal G}_{-k}$, defined in~\eqref{gtilde} to ${G}_{-k}$ in such a way that  the isomorphism $\iota$ maps $\bar{G}_{-k}$ to the `non-positive' Kirillov vector fields $L_{-k}$. We will construct only $\bar{G}_0,\bar{G}_{-1},\bar{G}_{-2}$, and then, we extend the isomorphism $\iota$ to the Lie algebra isomorphism by defining $\bar{G}_{-(n+m)}(m-n)=\{\bar{G}_{-n},\bar{G}_{-m}\}$, $n,m\geq 0$.

Let us remind that the first 3 Virasoro generators written in affine coordinates are
\begin{itemize}
  \item $L_0[f](z)=\sum_{n=1}^{\infty}nc_n\partial_n$;
 \item $L_{-1}[f](z)=\sum_{n=1}^{\infty}\Big((n+2)c_{n+1}-2c_1c_n\Big)\partial_n$;
 \item $L_{-2}[f](z)=\sum_{n=1}^{\infty}\Big((n+3)c_{n+2}+(c_1^2-4c_2)c_{n}-a_{(n+2)}\Big)\partial_n,$
where the coefficient $a_n$ can be found from the recurrent relations
\begin{equation}\label{arec}
a_n=-\sum_{k=1}^{n}c_ka_{n-k},\quad a_0=1.
\end{equation}
 \end{itemize}

In order to construct $\bar G_k=\iota^{-1}(L_k)$, $k=0,-1$ we consider the coefficients $\widetilde{\mathcal G}$ from~\eqref{gtilde} as functions of $\bar\psi_{>0}$, and write $\bar\psi_0^*=\sum_{k=1}^{\infty} c_k\bar\psi_k$. We deduce that
$$
{G}_0=\widetilde{\mathcal G}_0-\psi^*_0,
\qquad
{G}_{-1}=\widetilde{\mathcal G}_{-1}-2 \bar c_1\psi^*_0.
$$
Since $\widetilde{\mathcal G}_{-2}=\sum_{k=1}^{\infty}(k+3) \bar c_{k+2}\psi_k$, we have
$$
{G}_{-2}=\widetilde{\mathcal G}_{-2}+\sum_{k=1}^{\infty}\Big((\bar c_1^2-4 \bar c_2)\bar c_{k}-\bar a_{(k+2)}\Big)\psi_k.
$$

Let us write this in terms of operators. Let
$$
\widetilde{B}^{(0)}=\left(\begin{array}{ccccc}
\cdots&\cdots&\cdots &\cdots&\cdots  \\
\cdots& 0 &0&0 & 0  \\
\cdots& 0 &0&0 & 0  \\
\cdots&-{c}_4& -{c}_3 & -{c}_2 & -{c}_1 
 \end{array}\right), 
 $$
$$
\widetilde{B}^{(1)}=\left(\begin{array}{ccccc}
\cdots&\cdots&\cdots &\cdots&\cdots  \\
\cdots& 0 &0&0 & 0  \\
\cdots& -2{c}_1{c}_4 &-2{c}_1{c}_3& -2{c}_1{c}_2 & -2{c}_1{c}_1  \\
 \cdots&-{c}_4& -{c}_3 & -{c}_2 & -{c}_1 
 \end{array}\right),
 $$
$$
\widetilde{B}^{(2)}=\left(\begin{array}{ccccc}
  \cdots&\cdots&\cdots &\cdots&\cdots  \\
\cdots& 0 &0&0 & 0  \\
 \cdots& ( c_1^2-4 c_2) c_{4}-\alpha_{6} &( c_1^2-4 c_2)c_{3}- \alpha_{5}&( c_1^2-4c_2)c_{2}-\alpha_{4}  & 2{c}_1 -6{c}_1{c}_2 +{c}_3 \\
\cdots& -2{c}_1{c}_4 &-2{c}_1{c}_3& -2{c}_1{c}_2 & -2{c}_1{c}_1  \\
\cdots&-{c}_4& -{c}_3 & -{c}_2 & -{c}_1 
 \end{array}\right),
$$
$$
 C_{1,2}^{(0)}=\left(\begin{array}{ccccc}
  \cdots&\cdots&\cdots &\cdots&\cdots\\
\cdots& 0 &0&0 & 0  \\
\cdots& 0 &0&0 & 0  \\
 \cdots&5{c}_4& 4{c}_3 & 3{c}_2 & 2{c}_1
 \end{array}\right),\quad C_{2,1}^{(1)}=\left(\begin{array}{ccccc}
 \cdots&\cdots&\cdots &\cdots&\cdots  \\
\cdots& 0 &0&0 & 0  \\
 \cdots&6{c}_5& 5{c}_4 & 4{c}_3 & 3{c}_2 \\
   \cdots&5{c}_4& 4{c}_3 & 3{c}_2 & 2{c}_1 
 \end{array}\right),
$$
$$
 C_{2,1}^{(2)}=\left(\begin{array}{ccccc}
 \cdots&\cdots&\cdots &\cdots&\cdots  \\
 \cdots& 0 &0&0 & 0  \\
 \cdots&7{c}_6& 6{c}_5 & 5{c}_4 & 4{c}_3 \\
\cdots&6{c}_5& 5{c}_4 & 4{c}_3 & 3{c}_2 \\
  \cdots&5{c}_4& 4{c}_3 & 3{c}_2 & 2{c}_1 
 \end{array}\right),
$$
where $a_n$ are given by~\eqref{arec}.
Then the operators $T_n$ such that their conjugates are $\bar T_n=(\widetilde B^{(n)}+C^{(n)}_{2,1})\circ C_{1,1}^{-1}$, are operators from $H_+$ to $H_-$ of finite rank and their graphs $W_{T_n}=(\id+T_n)(H_+)$ are elements of the component of virtual dimension $0$ in $\Gr_{\infty}(H)$.
We can choose a basis $\{e_0,e_1,e_2,\dots\}$ in  $W_{T_n}$ as a set of Laurent polynomials constructed by means of operators $T_n$ and $\bar C_{1,1}$ as
$$\{\psi_1,\psi_2,\dots\}\buildrel{\bar{C}_{1,1}}\over\longrightarrow \{G_1,G_2,\dots\}\buildrel{\id+T_n}\over\longrightarrow\{G_{-n+1},G_{-n+2},\dots, G_{0},G_1,G_2,\dots\},$$
projecting the canonical basis $\{1,0,0,\dots\}$, $\{0,1,0,\dots\}$, $\{0,0,1,\dots\}$,\dots:
\begin{eqnarray*}
e_0 &=& 1+\bar{c}_1\frac{1}{z}+(3\bar c_2-2\bar c_1^2)\frac{1}{z^2}+(5\bar c_3+2\bar c_1^3-6 \bar c_1\bar c_2)\frac{1}{z^3}+\dots +G_{-n+1}(\bar C_{1,1}(1,0,0,\dots))\frac{1}{z^n},\\
e_1& = &z+2\bar c_1+2\bar{c}_2\frac{1}{z}+(4\bar c_3-2\bar c_1\bar c_2)\frac{1}{z^2}+(6\bar c_4-5\bar c_2^2-2 \bar c_1\bar c_3+4\bar c_1^2 \bar c_2-\bar c_1^4)\frac{1}{z^3}+\dots\\ 
& & +\, G_{-n+1}(\bar C_{1,1}(0,1,0,\dots))\frac{1}{z^n},\\
e_2& = &z^2+2\bar c_1z+3\bar c_2+3\bar{c}_3\frac{1}{z}+(5\bar c_4-2\bar c_1\bar c_3)\frac{1}{z^2}+\\ 
& &+ (7\bar c_5-6\bar c_2\bar c_3+3 \bar c_1\bar c_2^2-2\bar c_1 \bar c_4+4\bar c_1^2\bar c_3-4\bar c_1^3c_2+\bar c_1^5)\frac{1}{z^3}+\dots\\ & &+\, G_{-n+1}(\bar C_{1,1}(0,0,1,\dots))\frac{1}{z^n},\\
\dots &\dots&\dots
\end{eqnarray*}
Let us  formulate the result as the following {\it main statement} of this section.

\begin{proposition}\label{graph}
The operator $(\id +T_n)$ defines a graph $W_{T_n}=\spn\{e_0,e_1,e_2,\dots\}$ in the Grassmannian $\Gr_{\infty}$ of virtual dimension 0. 
Given any $\psi=\sum_{k=0}^{\infty}\psi_{k+1}z^k\in H_+\subset H$, the function
\[
{G}(z)=\sum_{k=-n}^{\infty}{G}_{k+1}z^k=\sum_{k=0}^{\infty} \psi_{k+1}e_k,
\]
is an element of $W_{T_n}$. \end{proposition}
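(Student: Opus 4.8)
The plan is to read the proposition as two assertions and to reduce both to the two structural facts already established just before it: that $T_n$ has finite rank (so its graph lands in the virtual dimension $0$ component of $\Gr_{\infty}$), and that $C_{1,1}\colon H_+\to H_+$ is invertible. First I would package the construction of Step~2 into the single linear map
\[
M=(\id+T_n)\circ\bar C_{1,1}\colon H_+\longrightarrow H,
\]
and observe that, by the very way the basis was produced from the canonical vectors $\{1,0,0,\dots\},\{0,1,0,\dots\},\dots$, one has $e_k=M(z^k)$ for every $k\ge 0$. Indeed $\bar C_{1,1}(z^k)$ supplies the polynomial $H_+$-part of $e_k$ displayed in the list, while $\id+T_n$ appends the finite tail $G_0z^{-1}+\dots+G_{-n+1}z^{-n}\in H_-$.

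For the first assertion I would argue that $M$ is a linear isomorphism of $H_+$ onto $W_{T_n}$. Since the preceding proposition gives invertibility of $C_{1,1}$, and hence of $\bar C_{1,1}$, on $H_+$, while $\id+T_n$ is injective with image exactly the graph $W_{T_n}=(\id+T_n)(H_+)$, the composition $M$ is a continuous bijection $H_+\to W_{T_n}$. Consequently the images $\{e_k=M(z^k)\}_{k\ge 0}$ of the canonical basis form a basis of $W_{T_n}$, whence $W_{T_n}=\spn\{e_0,e_1,\dots\}$; that $W_{T_n}$ sits in the component of virtual dimension $0$ is precisely what was recorded when $T_n$ was shown to be of finite rank.

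The second assertion is then essentially the linearity of $M$. Writing $\psi=\sum_{k\ge 0}\psi_{k+1}z^k\in H_+$ and applying $M$ term by term yields $\sum_{k\ge 0}\psi_{k+1}e_k=M\psi\in W_{T_n}$. It remains to identify $M\psi$ with $G(z)=\sum_{k\ge -n}G_{k+1}z^k$: by definition $\bar C_{1,1}\psi$ produces the nonnegative coefficients $G_{k+1}$, $k\ge 0$ (the positive part of the generating function), and $T_n$ produces exactly the finite tail $\sum_{k=-n}^{-1}G_{k+1}z^k$, so adding the two parts reproduces $G$.

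The step I expect to require the most care is the topological one hidden in the symbol $\spn$. Each $e_k$ carries an $H_-$-tail of length $n$, so in the infinite combination $\sum_{k}\psi_{k+1}e_k$ the coefficient of each negative power $z^{-j}$, $1\le j\le n$, is itself an infinite series in the $\psi_{k+1}$. I would justify its convergence in $H=C^{\infty}_{\|\cdot\|_2}(S^1)$ from the boundedness of $T_n$ and $\bar C_{1,1}$, whose matrix entries are built from the rapidly decaying Taylor coefficients $c_m$ of the smooth embedding $f$. This boundedness simultaneously makes $M$ continuous and guarantees that $\spn\{e_0,e_1,\dots\}$ is genuinely closed and equal to $W_{T_n}$, rather than merely dense in it.
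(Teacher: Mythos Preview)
Your argument is correct and matches the paper's intent. In fact, the paper does not supply a separate proof of this proposition at all: it is stated immediately after the explicit construction of the $e_k$ via the chain $\{\psi_k\}\xrightarrow{\bar C_{1,1}}\{G_k\}\xrightarrow{\id+T_n}\{G_{-n+1},\dots,G_0,G_1,\dots\}$ applied to the canonical basis vectors, with the words ``Let us formulate the result as the following main statement of this section.'' Your packaging of that chain as the single linear map $M=(\id+T_n)\circ\bar C_{1,1}$ and the observation $e_k=M(z^k)$ is exactly what the paper leaves implicit, and your treatment of the closure/convergence issue is more careful than the paper itself.
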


\begin{proposition}
In the autonomous case of the Cauchy problem \eqref{LK2}, when the function $p(z,t)$ does not depend on $t$, the pseudo-Hamiltonian $\mathcal H$ plays the role of time-dependent energy and $\mathcal H(t)=\bar{G}_0(t)+const$, where $\bar{G}_0\big|_{t=0}=0$. The constant is defined as
$\sum_{n=1}^{\infty}p_k\bar{\psi}_k(0)$.
\end{proposition}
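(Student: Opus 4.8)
The plan is to verify the claimed identity $\mathcal H(t)=\bar G_0(t)+\text{const}$ by direct computation in the autonomous setting, exploiting the fact that in this case the driving term $p(z)=1+\sum_{k\geq 1}p_k z^k$ carries no time dependence. First I would write out the pseudo-Hamiltonian \eqref{Ham3} explicitly in Laurent coefficients. Substituting $f(z,t)=z(1+\sum_{n\geq 1}c_n(t)z^n)$, $\bar\psi(z,t)=\sum_{m\in\mathbb Z}\bar\psi_m z^{m-1}$, and the autonomous $p(e^{-t}f,t)=1+\sum_{k\geq 1}p_k e^{-kt}f^k$ into the contour integral, the factor $(1-p(e^{-t}f))$ kills the leading $1$, so that $\mathcal H$ reduces to a sum over $k\geq 1$ of residue terms weighted by $p_k e^{-kt}$. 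The key observation is that the residue $\frac{1}{2\pi}\int_{S^1}\bar z^2 f(z)\,(e^{-t}f(z))^k\,\bar\psi(z)\,\frac{dz}{iz}$ is a combination of the coefficients $c_n$ and $\bar\psi_m$ that, by Proposition~\ref{timeindep}, organizes into the time-independent generating coefficients $\bar{\mathcal G}_n$ (and hence $\bar G_n$).

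Second, I would identify the resulting expression with $\bar G_0(t)$ plus a constant. Recall from Step~2 that $\bar G_0=\iota^{-1}(L_0)$ corresponds to the Virasoro generator $L_0[f]=\sum_{n\geq 1}nc_n\partial_n$, so under the isomorphism $\iota:\bar\psi_k\mapsto\partial_k$ we have $\bar G_0=\sum_{n\geq 1}nc_n\bar\psi_n$ up to the Step~2 correction $\psi_0^*=\sum_{k\geq1}c_k\bar\psi_k$. The plan is to show that the time-dependent part of $\mathcal H$ collected above is exactly $\bar G_0(t)$ evaluated along the Hamiltonian trajectory, while the genuinely constant remainder equals $\sum_{k\geq 1}p_k\bar\psi_k(0)$. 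The normalization $\bar G_0\big|_{t=0}=0$ should be read off at $t=0$, where $w(z,0)=z$ forces $c_n(0)=0$, so that $L_0$ and hence $\bar G_0$ vanish at the initial instant; this simultaneously pins down the constant as the value of the remaining term at $t=0$, namely $\sum_{k\geq 1}p_k\bar\psi_k(0)$.

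Third, to confirm that $\mathcal H$ indeed acts as a time-dependent energy, I would invoke the Hamiltonian structure directly: since $p$ is $t$-independent, the only explicit $t$-dependence in $\mathcal H$ enters through the factors $e^{-kt}$, and one computes $\frac{d\mathcal H}{dt}=\frac{\partial \mathcal H}{\partial t}$ (the Poisson-bracket self-term $\{\mathcal H,\mathcal H\}$ vanishing identically). Comparing $\frac{\partial \mathcal H}{\partial t}$ with $\frac{d\bar G_0}{dt}=\{\bar G_0,\mathcal H\}$, which is governed by the Witt relations established in the previous proposition, yields the matching of time derivatives and closes the identification up to the additive constant already fixed in the previous step.

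The main obstacle I anticipate is the bookkeeping in the second step: tracking precisely how the residue contributions of the $p_k e^{-kt}f^k$ terms reassemble into the single coefficient $\bar G_0$ rather than a more complicated combination of the $\bar{\mathcal G}_n$. The danger is that the naive expansion of $f^k$ produces cross terms that one must show either cancel against the Step~2 correction $\psi_0^*$ or collapse by the time-independence of all higher $\bar{\mathcal G}_n$ guaranteed by Proposition~\ref{timeindep}. Disentangling the genuinely time-varying part (which must be $\bar G_0(t)$) from the frozen part (the constant $\sum_{k\geq 1}p_k\bar\psi_k(0)$) is where the care is required; the rest is routine residue calculus and appeals to the already-established Witt commutation relations.
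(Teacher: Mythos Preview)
Your step 3 is essentially the paper's own proof, and it is \emph{sufficient on its own}: in the autonomous case $\frac{d}{dt}\mathcal H=\{\mathcal H,\mathcal H\}+\partial_t\mathcal H=\partial_t\mathcal H$, and a direct calculation (not the Witt relations, which concern only $\bar{\mathcal G}_m,\bar{\mathcal G}_n$ with $m,n\geq 1$) shows that $\frac{d}{dt}\bar G_0=\{\bar G_0,\mathcal H\}$ equals the same expression $\partial_t\mathcal H$. Hence $\mathcal H-\bar G_0$ is constant along the trajectory; evaluating at $t=0$ (where $c_n(0)=0$, so $\bar G_0(0)=0$) gives the constant $-\sum_{k\geq 1}p_k\bar\psi_k(0)$ from the $k\geq 1$ part of $\mathcal H|_{t=0}$. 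That is the entire argument in the paper.

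Your steps 1--2, by contrast, attempt a direct algebraic identification of $\mathcal H$ with $\bar G_0+\text{const}$, and the obstacle you flag is genuine and, in fact, fatal for that route as written. The residue you single out,
\[
\frac{1}{2\pi}\int_{S^1}\bar z^{2}f(z)\,(e^{-t}f(z))^{k}\,\bar\psi(z)\,\frac{dz}{iz},
\]
involves $f^{k+1}\bar\psi$, whereas the generating function is $\bar{\mathcal G}=f'\bar\psi$; these integrals are \emph{not} coefficients of $\bar{\mathcal G}$ and are not individually conserved by Proposition~\ref{timeindep}. No amount of bookkeeping will reorganize them into the $\bar{\mathcal G}_n$ at a fixed instant, because the identity $\mathcal H=\bar G_0+\text{const}$ is \emph{not} an algebraic identity in $(c_n,\bar\psi_m,p_k,t)$; it holds only along the Hamiltonian trajectory and is therefore most cleanly obtained by matching time derivatives, i.e., by your step~3. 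Drop steps 1--2 and keep step 3, replacing the appeal to the Witt relations by a direct computation of $\{\bar G_0,\mathcal H\}$ from $\bar G_0=\sum_{n\geq 1}nc_n\bar\psi_n$ and equations \eqref{sys1}--\eqref{sys2}.
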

\begin{proof}
In the autonomous case we have $\frac{d}{dt}\mathcal H=\frac{\partial}{\partial t}\mathcal H$. By straightforward calculation we assure that
$\frac{d}{dt}\bar{G}_0=\frac{\partial}{\partial t}\mathcal H$, which leads to the conclusion of the proposition. The constant is calculated by substituting $t=0$ in $\mathcal H$.
\end{proof}

\begin{remark} The Virasoro generator $L_0$ plays the role of the energy functional in CFT. In the view of the isomorphism $\iota$, the observable $\bar{G}_0=\iota^{-1}(L_0)$ plays an analogous role.
\end{remark}

As a {\it conclusion}, we constructed a countable family of curves $\Gamma_n\colon[0,T]\to\mathcal B$ in the trivial bundle $\mathcal B=T^{*}\mathcal F_0\otimes \mathbb C\,\times\,\Gr_{\infty}(H)$, such that the curve $\Gamma_n$ admits the form $\Gamma_n(t)=\Big(f(z,t),\psi(z,t),W_{T_n}(t)\Big)$, for $t\in[0,T]$ in the local trivialization. Here $\big(f(z,t),\bar\psi(z,t)\big)$ is the solution of the Hamiltonian system~(\ref{sys1}--\ref{sys2}). Each operator $T_n(t)\colon H_+\to H_-$ that maps ${\mathcal G}_{>0}$ to $\big({G}_0(t),{G}_{-1}(t),\ldots,{G}_{-n+1}(t)\big)$ defined for any $t\in[0,T]$, $n=1,2,\ldots$, is of finite rank and its graph $W_{T_n}(t)$  is a point in $\Gr_{\infty}(H)$ for any $t$. The graphs $W_{T_n}$  belong to the connected component of the virtual dimension $0$ for every time $t\in[0,T]$ and for fixed $n$.
Each coordinate $(G_{-n+1},\ldots, G_{-2},G_{-1},G_0,{G}_1,{G}_2,\ldots)$ of a point in the graph $W_{T_n}$ considered as a function of $\psi$ is isomorphic to the Kirilov vector fields $$(L_{-n+1},\ldots,L_{-2},L_{-1},L_0,L_1,L_1,L_2,\ldots)$$ under the isomorphism $\iota$.

 \section{$\tau$-function}

Remind that any function $g$ holomorphic in the unit disc, non vanishing on the boundary and normalized by $g(0)=1$ defines the multiplication operator $g\varphi$, $\varphi(z)=\sum_{k\in\mathbb Z}\varphi_kz^k$, that can be written in the matrix form
\begin{equation}\label{multip}
\left(
\begin{array}{cc}
a & b \\
0 & d
\end{array} \right) \left(
\begin{array}{c}
\varphi_{\geq 0}\\
 \varphi_{< 0}
\end{array} \right).
\end{equation}
All these upper triangular matrices form a subgroup $GL_{res}^{+}$ of the group of automorphisms $GL_{res}$ of the Grassmannian $\Gr_{\infty}(H)$.

With  any function $g$ and any graph $W_{T_n}$ constructed in the previous section (which is transverse to $H_-$) we can relate the $\tau$-function $\tau_{W_{T_n}}(g)$ by the following formula
$$
\tau_{W_{T_n}}(g)=\det(1+a^{-1}bT_n),$$
where $a,b$ are the blocks in the multiplication operator generated by $g^{-1}$. If we write the function $g$ in the form $g(z)=\exp(\sum_{n=1}^{\infty}t_nz^n)=1+\sum_{k=1}^{\infty}S_k(\tb)z^k$, where the coefficients $S_k(t)$ are the homogeneous elementary Schur polynomials, then the coefficients $t_n$ are called generalized times.
 For any fixed $W_{T_n}$ we get an orbit in $\Gr_{\infty}(H)$ of curves $\Gamma$ constructed in the previous section under the action of the elements of the subgroup  $GL_{res}^{+}$ defined by the function $g$. On the other hand,  the $\tau$-function defines a section in the determinant bundle over $\Gr_{\infty}(H)$ for any fixed $f\in\mathcal F_0$ at each point of the curve $\Gamma$.

\section{Baker-Akhiezer function, KP flows, and KP equation}

Let us consider the component $\Gr^{0}$ of the Grassmannian $\Gr_{\infty}$  of virtual dimension $0$, and let $g$ be a holomorphic function in $\mathbb D$ considered as an element of $GL_{res}^{+}$ analogously to  the previous section. Then $g$ is an upper triangular matrix with 1s on the principal diagonal. Observe that $g(0)=1$ and $g$ does not vanish on $S^1$. Given a point $W\in \Gr^{0}$ let us define
a subset $\Gamma^+ \subset GL_{res}^{+}$ as $\Gamma^+=\{g\in GL_{res}^{+}:\, g^{-1} W \text{\ is transverse to \ } H_-\}$. Then  there exists \cite{SegalWilson} a unique function $\Psi_W[g](z)$ defined on $S^1$, such that for each $g\in \Gamma^+$, the function $\Psi_W[g]$ is in $W$, and it admits the form
\[
\Psi_W[g](z)=g(z)\left(1+\sum_{k=1}^{\infty}\omega_k(g,W)\frac{1}{z^k}\right).
\]
The coefficients $\omega_k=\omega_k(g,W)$ depend both on $g\in \Gamma^{+}$ and on $W\in \Gr^{0}$, besides they are holomorphic on $\Gamma^+$ and extend to meromorphic functions on $GL_{res}^{+}$. The function $\Psi_W[g](z)$ is called the {\it Baker-Akhiezer function} of $W$. It plays a crucial role in the definition of the KP (Kadomtsev-Petviashvili) hierarchy which we will define later. We are going to construct the Baker-Akhiezer function explicitly in our case.

Let $W=W_{T_n}$ be a point of $\Gr^0$ defined in Proposition~\ref{graph}. Take a function $g(z)=1+a_1 z+a_2z^2+\dots\in \Gamma^+$, and
let us write the corresponding bi-infinite series for the Baker-Akhiezer function $\Psi_W[g](z)$ explicitly as 
\begin{eqnarray*}
\Psi_W[g](z)=\sum_{k\in\mathbb Z}\mathcal W_kz^k=(1 & + & a_1z+a_2z^2 + \dots)\left(1+\frac{\omega_1}{z}+\frac{\omega_2}{z^2}+\dots\right)=\\
= \dots &+& (a_2+a_3\omega_1+a_4\omega_2+a_5\omega_3+\dots)z^2\\
&+& (a_1+a_2\omega_1+a_3\omega_2+a_4\omega_3+\dots)z\\
&+& (1+a_1\omega_1+a_2\omega_2+a_3\omega_3+\dots)\\
&+& (\omega_1+a_1\omega_2+a_2\omega_3+\dots)\frac{1}{z}\\
&+& (\omega_2+a_1\omega_3+a_2\omega_4+\dots)\frac{1}{z^2}+\dots\\
\dots&+& (\omega_k+a_1\omega_{k+1}+a_2\omega_{k+2}+\dots)\frac{1}{z^k}+\dots\\
\end{eqnarray*}
The Baker-Akhiezer function for $g$ and $W_{T_n}$ must be of the form
\[
\Psi_{W_{T_n}}[g](z)=g(z)\left( 1+ \sum_{k=1}^{n}\omega_k(g)\frac{1}{z^k}\right)= \sum_{k=-n}^{\infty}\mathcal W_kz^k.
\]
For a fixed $n\in\mathbb N$ we truncate this bi-infinite series by putting $\omega_k=0$ for all $k>n$.
 In order to satisfy the definition of $W_{T_n}$, and determine the coefficients $\omega_1,\omega_2,\dots,\omega_n$, we must check 
 that there exists a vector $\{\psi_1,\psi_2,\dots\}$, such that
 $\Psi_{W_{T_n}}[g](z)=\sum_{k=0}^{\infty} e_k\psi_{k+1}$.
 First we express $\psi_k$ as linear functions  $\psi_k=\psi_k(\omega_1,\omega_2,\dots,\omega_n)$ by
 \begin{equation}\label{psi}
 (\psi_1,\psi_2,\psi_3,\dots)=\bar C_{1,1}^{-1}\Big(\mathcal W_0(\omega_1,\omega_2,\dots,\omega_n),\mathcal W_1(\omega_1,\omega_2,\dots,\omega_n),\dots\Big).
 \end{equation}
 Using Wronski formula we can write
 \begin{eqnarray*}
 \psi_1&=&\mathcal W_0-2\bar c_1\mathcal W_1-(3\bar c_2-4\bar c_1^2)\mathcal W_2-(4\bar c_3-12\bar c_2\bar c_1+8\bar c_1^3)\mathcal W_3+\dots,\\
\psi_2&=& \mathcal W_1-2\bar c_1\mathcal W_2-(3\bar c_2-4\bar c_1^2)\mathcal W_3-(4\bar c_3-12\bar c_2\bar c_1+8\bar c_1^3)\mathcal W_4+\dots,\\
\psi_3 &= &\mathcal W_2-2\bar c_1\mathcal W_3-(3\bar c_2-4\bar c_1^2)\mathcal W_4-(4\bar c_3-12\bar c_2\bar c_1+8\bar c_1^3)\mathcal W_5+\dots,\\
 \dots &\dots &\dots
 \end{eqnarray*}
 
 Next we define $\omega_1,\omega_2,\dots,\omega_n$ as functions of $g$ and $W_{T_n}$, or in other words, as functions
 of $a_k, \bar c_k$ by solving linear equations
 \begin{eqnarray*}
 \omega_1&=&\bar c_1\psi_1+2\bar c_2\psi_2+\dots k\bar c_k\psi_k+\dots,\\
 \omega_2&=& \sum_{k=1}^{\infty}\Big((k+2)\bar c_{k+1}-2\bar c_1\bar c_k\Big)\psi_k,\\
 \dots &\dots &\dots
 \end{eqnarray*}
 where $\psi_k$ are taken from \eqref{psi}. The solution exists and unique because of the general fact of the existence of the Baker-Akhiezer function. It is quite difficult task in general, however, in the case $n=1$, it is possible to write the solution explicitly in the matrix form. If
 \[
 A=\left(
\begin{array}{c}
\dots \\
3\bar c_3\\
2\bar c_2\\
\bar c_1
\end{array} \right)^T\bar C_{1,1}^{-1} \left(\begin{array}{c}
\dots \\
 a_3\\
a_2\\
a_1
\end{array} \right),\quad
B=\left(
\begin{array}{c}
\dots \\
3\bar c_3\\
2\bar c_2\\
\bar c_1
\end{array} \right)^T\bar C_{1,1}^{-1} \left(\begin{array}{c}
\dots \\
 a_2\\
a_1\\
1
\end{array} \right).
 \]
 then $\omega_1=\frac{B}{1-A}$.

 In order to apply further theory of integrable systems we need 
to change variables $a_n\to a_n(\tb)$, $n>0$, $\tb=\{t_1,t_2,\dots\}$ in the following way
\[
a_n=a_n(t_1,\dots, t_n)=
S_n(t_1,\dots, t_n),
\]
where $S_n$ is the $n$-th elementary homogeneous Schur polynomial
\[
1+\sum\limits_{k=1}^{\infty}S_k(\tb)z^k=\exp\left(\sum\limits_{k=1}^{\infty}t_kz^k\right)=e^{\xi(\tb,z)}.
\]
In particular,
\[
S_1=t_1,\quad S_2=\frac{t_1^2}{2}+t_2,\quad S_3=\frac{t_1^3}{6}+t_1t_2+t_3,
\]
\[
S_4=\frac{t_1^4}{24}+\frac{t_2^2}{2}+\frac{t_1^2t_2}{2}+t_1t_3+t_4.
\]
Then the Baker-Akhiezer function corresponding to the graph $W_{T_n}$ is written as
\[
\Psi_{W_{T_n}}[g](z)=\sum_{k=-n}^{\infty}\mathcal W_kz^k=e^{\xi(\tb, z)}\left(1+\sum_{k=1}^{n}\frac{\omega_k(\tb,W_{T_n})}{z^k}\right),
\]
and $\tb=\{t_1,t_2,\dots\}$ is called the vector of generalized times.
It is easy to see that 
\[
\partial_{t_k}a_m=0,\quad\text{for all \ }m=1,2\dots,k-1,
\]
$\partial_{t_k}a_m=1$ and
\[
\partial_{t_k}a_m=a_{m-k},\quad\text{for all \ }m>k.
\]
In particular, $B=\partial_{t_1}A$. Let us denote $\partial:=\partial_{t_1}$. Then in the case $n=1$ we have
\begin{equation}\label{omega}
\omega_1=\frac{\partial A}{1-A}.
\end{equation}

Now we consider the associative algebra of pseudo-differential operators $\mathcal A=\sum_{k=-\infty}^na_k\partial^k$ over the space of smooth functions, where the derivation symbol $\partial$ satisfies the Leibniz rule and the integration symbol and its powers satisfy
the algebraic rules $\partial^{-1}\partial=\partial \partial^-1=1$ and $\partial^{-1}a$ is the operator $\partial^{-1}a=\sum_{k=0}^{\infty}(-1)^k(\partial^k a)\partial^{-k-1}$ (see; e.g., \cite{Dickey}). 
The action of the operator $\partial^{m}$, $m\in\mathbb Z$, is well-defined over the function $e^{\xi(\tb,z)}$, where $\xi(\tb,z)=\sum_{k=1}^{\infty}t_kz^k$,  so that  the function $e^{\xi(\tb,z)}$ is the eigenfunction of the operator $\partial^m$ for any integer $m$, i.e., it satisfies the equation
\begin{equation}\label{eigen}
\partial^me^{\xi(\tb,z)}=z^m e^{\xi(\tb,z)},\quad m\in\mathbb Z,
\end{equation}
see; e.g., \cite{Babelon, Dickey}. 
As usual, we identify  $\partial=\partial_{t_1}$, and $\partial^0=1$.

 Let us introduce the dressing operator $\Lambda=\phi\partial \phi^{-1}=\partial+\sum_{k=1}^{\infty}\lambda_k\partial^{-k}$, where $\phi$ is a pseudo-differential operator
$\phi=1+\sum_{k=1}^{\infty}w_k(\tb)\partial^{-k}$. 
The operator $\Lambda$ is defined up to the multiplication on the right by a series $1+\sum_{k=1}^{\infty}b_k\partial^{-k}$ with constant coefficients $b_k$. 
The $m$-th KP flow is defined by making use of the vector field
\[
\partial_m\phi:=-\Lambda^m_{<0}\phi,\quad \partial_m=\frac{\partial}{\partial t_m},
\]
and the flows commute. In the Lax form the KP flows are written as 
\begin{equation}\label{Lax}
\partial_m\Lambda=[\Lambda^m_{\geq 0},\Lambda].
\end{equation}
If $m=1$, then $\partial \Lambda=[\partial, \Lambda]=\sum_{k=1}^{\infty}(\partial{\lambda_k})\partial^{-k}$, which justifies the identification
$\partial=\partial_{t_1}$.

Thus, the Baker-Akhiezer function $\Psi_{W_{T_n}}[g](z)$  admits the form $\Psi_{W_{T_n}}[g](z)=\phi \exp(\xi(\tb,z))$ where $\phi$ is a pseudo-differential operator
$\phi=1+\sum_{k=1}^{n}\omega_k(\tb,W_{T_n})\partial^{-k}$. 
The  function $\Psi_{W_{T_n}}[g](z)$ becomes the eigenfunction of the operator $\Lambda^m$, namely $\Lambda^m w=z^m w$, for $m\in\mathbb Z$. Besides, $\partial_m w= \Lambda^m_{>0}w$.  In the view of \eqref{eigen} we can write this function
as previously,
\[
\Psi_{W_{T_n}}[g](z)=(1+\sum_{k=1}^{n}\omega_k(\tb, W_{T_n})z^{-k})e^{\xi(\tb,z)}.
\]

\begin{proposition}
Let $n=1$, and let the Baker-Akhiezer function be of the form
\[
\Psi_{W_{T_n}}[g](z)=e^{\xi(\tb,z)}\left(1+\frac{\omega}{z}\right),
\]
where $\omega=\omega_1$ is given by the formula \eqref{omega}. Then 
\[
\partial \omega=\frac{\partial^2 A}{1-A}+\left(\frac{\partial A}{1-A}\right)^2
\]
is a solution to the KP equation with  the Lax operator $L=\partial^2-2(\partial \omega)$.
\end{proposition}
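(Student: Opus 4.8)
The plan is to reduce the statement to the standard dressing formalism for the KP hierarchy, and then to read off the explicit formula by a one-line differentiation. Throughout I write $\partial=\partial_{t_1}$ and work in the case $n=1$ with the dressing operator $\phi=1+\omega\partial^{-1}$, so that the Baker--Akhiezer function is $\Psi=\phi\,e^{\xi(\tb,z)}=(1+\omega/z)\,e^{\xi(\tb,z)}$ and the dressed operator is $\Lambda=\phi\partial\phi^{-1}$.

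The first step is to identify the operator $L=\partial^2-2(\partial\omega)$ with the differential part $(\Lambda^2)_{\geq 0}$ of the square of the dressed operator. To do this I would first invert $\phi$, obtaining $\phi^{-1}=1-\omega\partial^{-1}+\omega^2\partial^{-2}+\cdots$ from $\phi\phi^{-1}=1$, and then compute $\Lambda=\phi\partial\phi^{-1}$ order by order using the Leibniz rule $\partial\circ a=a\partial+(\partial a)$ together with the rule $\partial^{-1}a=\sum_{k\geq0}(-1)^k(\partial^k a)\partial^{-k-1}$ recorded before \eqref{eigen}. The constant term cancels and the coefficient of $\partial^{-1}$ collapses to $-(\partial\omega)$, giving $\Lambda=\partial-(\partial\omega)\partial^{-1}+\cdots$. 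Squaring and discarding the negative-order tail yields $(\Lambda^2)_{\geq0}=\partial^2-2(\partial\omega)$, which is exactly $L$; equivalently, the KP potential is $u=-2(\partial\omega)$. This fixes the normalization in which $\partial\omega$ (up to the factor $-2$) is the field whose evolution must be controlled.

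The second step is to produce the KP equation itself. By the Segal--Wilson construction the Baker--Akhiezer function attached to the Grassmannian point $W_{T_1}\in\Gr^{0}$ from Proposition~\ref{graph} automatically yields a solution of the hierarchy, so $\Lambda$ satisfies the Lax flows \eqref{Lax}, $\partial_m\Lambda=[\Lambda^m_{\geq0},\Lambda]$, and these flows commute. Setting $B_2=(\Lambda^2)_{\geq0}=L$ and $B_3=(\Lambda^3)_{\geq0}$, the commutativity of the $t_2$- and $t_3$-flows is the zero-curvature relation $[\partial_{t_2}-B_2,\,\partial_{t_3}-B_3]=0$. Expanding this Zakharov--Shabat identity and eliminating $B_3$ in favour of $u$ gives the KP equation for $u=-2(\partial\omega)$ in the familiar form $3u_{t_2t_2}=\partial_{t_1}\big(4u_{t_3}-6uu_{t_1}-u_{t_1t_1t_1}\big)$; this is the standard computation recorded, e.g., in \cite{Dickey,Babelon} already cited in this section, so I would invoke it rather than redo it.

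Finally, the explicit formula is immediate: for $n=1$ equation \eqref{omega} gives $\omega=\partial A/(1-A)$, whence the quotient rule yields
\[
\partial\omega=\frac{\partial^2A}{1-A}+\frac{(\partial A)^2}{(1-A)^2}=\frac{\partial^2A}{1-A}+\left(\frac{\partial A}{1-A}\right)^2,
\]
which is the claimed expression. The part that will need the most care is the second step: one must make sure that the single-term dressing $\phi=1+\omega\partial^{-1}$, with $\omega=\omega(\tb)$ inherited from the Grassmannian construction, genuinely satisfies the full hierarchy \eqref{Lax}, since a generic one-term dressing does not. This is exactly where the truncation $\omega_k=0$ for $k>1$ forced by the special rank-one structure of $W_{T_1}$, together with the Segal--Wilson existence theorem for the Baker--Akhiezer function, has to be used. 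Granting that, the only remaining delicate point is careful bookkeeping of signs and numerical coefficients, so that the potential is precisely $u=-2(\partial\omega)$ and the resulting equation is genuinely the KP equation.
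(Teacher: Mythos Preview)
Your argument is correct, but it follows a different path from the paper's. You identify $\lambda_1=-\partial\omega$ \emph{algebraically}, by computing $\Lambda=\phi\partial\phi^{-1}$ in the pseudo-differential calculus; the paper instead determines $\lambda_1$ \emph{analytically}, by checking the linear problem $\partial_{t_2}\Psi=L\Psi$ directly on the explicit Baker--Akhiezer function. For that it uses the special identity $\partial_{t_k}A=\partial^k A$ (valid here because the $\bar c$--coefficients in $A$ are $t$--independent and $\partial_{t_k}a_m=a_{m-k}$), from which one computes $\partial_{t_2}\omega=\partial^2\omega-2\omega\,\partial\omega$ and then compares $\partial_{t_2}\Psi$ with $\partial^2\Psi$ to read off $\lambda_1=-\partial\omega$. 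After this, the paper writes out the first few Lax equations~\eqref{Lax} for $\lambda_1,\lambda_2,\lambda_3$ explicitly and eliminates $\lambda_2,\lambda_3$ to obtain the KP equation, whereas you invoke the zero-curvature relation and cite \cite{Dickey,Babelon}.

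What each approach buys: your route is shorter and conceptually cleaner, but it leans entirely on the Segal--Wilson theorem to know that the flows \eqref{Lax} actually hold for this particular $\phi=1+\omega\partial^{-1}$ (as you yourself flag). The paper's route has the merit of \emph{verifying} the $t_2$--flow by hand for the concrete $\omega=\partial A/(1-A)$, so the reader sees that the Grassmannian construction is genuinely compatible with at least that flow; it also keeps the derivation of the KP equation self-contained rather than delegated to a reference. Either way, the final differentiation giving $\partial\omega=\dfrac{\partial^2A}{1-A}+\left(\dfrac{\partial A}{1-A}\right)^2$ is the same one-line quotient-rule computation.
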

\begin{proof}
First of all, we observe that 
\[
\Lambda^2_{\geq 0}=\partial^2+2\lambda_1,\quad \Lambda^3_{\geq 0}=\partial^3+3\lambda_1\partial+3(\partial \lambda_1)+3\lambda_2.
\]
Given $\phi=1+\omega\partial^{-1}$, we are looking for the coefficient $\lambda_1$
checking the equality $\partial_{t_2}\Psi_{W_{T_n}}[g]=L\, \Psi_{W_{T_n}}[g]$,
for the Lax operator $L=\Lambda^2_{\geq 0}=\partial^2+2\lambda_1$.

First of all, we need some auxiliary calculations
\[
\partial_{t_k}A=\partial^kA, \quad k=1,2,\dots;\qquad \partial_{t_2}\omega=\partial^2\omega-2\omega\partial\omega;
\]
\[
\partial_{t_2}\Psi=z^2\Psi+g\frac{\partial_{t_2} \omega}{z};
\]
\[
\partial^2\Psi=z^2\Psi+\frac{g}{z}(2z\partial \omega+2\omega\partial \omega+\partial_{t_2}\omega).
\]
Then, comparing latter two equalities we conclude that $\partial_{t_2}\Psi=\partial^2\Psi-(2\partial\omega)\Psi$, and $\lambda_1=-\partial\omega$.
Now we use the formula for the KP hierarchy \eqref{Lax} and write the time evolutions 
\begin{eqnarray*}
\partial_{t_2}\lambda_1&=&\partial^2\lambda_1+2\partial\lambda_2,\\
\partial_{t_2}\lambda_2&=&\partial^2\lambda_2+2\partial\lambda_3+2\lambda_1\partial\lambda_1,\\
\partial_{t_3}\lambda_1&=&\partial^3\lambda_1+3\partial^2\lambda_2+3\partial\lambda_3+6\lambda_1\partial\lambda_1.
\end{eqnarray*}
Finally, eliminating $\lambda_2$ and $\lambda_3$ we arrive at the first equation (KP equation) in the KP hierarchy for $\partial\omega$
\[
3\partial^2_{t_2}\lambda_1=\partial (4\partial_{t_3}\lambda_1-12\lambda_1\partial\lambda_1-\partial^3\lambda_1).
\]
The latter is a standard procedure, see; e.g., \cite{Babelon}.
\end{proof}

Of course, one can express the Baker-Akhiezer function  directly from the $\tau$-function by the Sato formula
\[
\Psi_{W_{T_n}}[g](z)=e^{\xi(\tb,z)}\frac{\tau_{W_{T_n}}(t_1-\frac{1}{z},t_2-\frac{1}{2z^2},t_3-\frac{1}{3z^3},\dots)}{\tau_{W_{T_n}}(t_1,t_2,t_3\dots)},
\]
or applying the {\it vertex operator} $V$ acting on the Fock space $\mathbb C[\tb]$ of homogeneous polynomials
\[
\Psi_{W_{T_n}}[g](z)=\frac{1}{\tau_{W_{T_n}}}V\tau_{W_{T_n}},
\]
where
\[
V=\exp\left(\sum\limits_{k=1}^{\infty}t_kz^k\right)\exp\left(-\sum\limits_{k=1}^{\infty}\frac{1}{k}\frac{\partial}{\partial t_k}z^{-k}\right).
\]
In the latter expression $\exp$ denotes the formal exponential series and $z$ is another formal variable that commutes with all Heisenberg operators $t_k$ and $\frac{\partial}{\partial t_k}$. Observe that
the exponents in $V$ do not commute and the product of exponentials is calculated by the Baker-Campbell-Hausdorff formula.
The operator $V$ is a vertex operator in which the coefficient $V_k$ in the expansion of $V$ is a well-defined linear operator on the space  $\mathbb C[\tb]$. The Lie algebra of operators spanned by $1,t_k,\frac{\partial}{\partial t_k}$, and $V_k$, is isomorphic to the affine Lie algebra $\hat{\mathfrak{s}\mathfrak{l}}(2)$.
The vertex operator  $V$ plays a central role in the highest weight representation of affine Kac-Moody algebras \cite{Kac, Moody}, and can be interpreted as the infinitesimal B\"acklund transformation for the Korteweg--de Vries equation~\cite{Date}.

The vertex operator $V$ recovers the Virasoro algebra in the following sense. Taken in two close points $z+\lambda/2$ and $z-\lambda/2$ the operator product
can be expanded in to the following formal Laurent-Fourier series
\[
:V(z+\frac{\lambda}{2})V(z-\frac{\lambda}{2}):=\sum\limits_{k\in \mathbb Z}W_k(z)\lambda^{k},
\]
where $:a b:$ stands for the bosonic normal ordering.
Then $W_2(z)=T(z)$ is the stress-energy tensor which we expand again as
\[
T(z)=\sum\limits_{n\in \mathbb Z}L_n(\tb)z^{n-2},
\]
where the operators $L_n$ are the Virasoro generators in the highest weight representation over  $\mathbb C[\tb]$. Observe that the generators $L_n$
span the full Virasoro algebra with central extension and with the central charge~1. This can be also interpreted as a quantization of the shape evolution.

\end{document}